\newcommand{\balg}{\begin{algorithm}}
\newcommand{\ealg}{\end{algorithm}}
\def\bref#1{(\ref{#1})}
\def\R{{\mathbb{R}}}
\def\C{{\mathbb{C}}}
\def\Q{{\mathbb{Q}}}
\def\I{{\mathbb{I}}}
\def\B{{\bf B}}
\def\PS{ {\mathcal{P}} }
\def\GB{{\mathcal{G}}}
\def\IS{{\mathcal{I}}}
\def\RB{\hbox{\rm{RB}}}
\def\Dis{\hbox{\rm{Dis}}}
\begin{document}
\begin{frontmatter}
%%%%%%%%%%%%%%%%%%%%%%%%%%%%%%%%%%%%%%%%%%%%%%%%%%

%\documentclass{article}
%\usepackage{mathrsfs}
%\usepackage{graphicx}  %% note spelling
%\usepackage{float}
%% The amssymb package provides various useful mathematical symbols
%\usepackage{amssymb}
%\usepackage{amsmath}
%\usepackage{cases}
%\usepackage{amsfonts}
%\usepackage{mathrsfs}
%
%\usepackage{graphicx}
%\usepackage{amsmath,amsfonts}
%%\usepackage[square,comma,sort]{natbib}

%
% Do not change the parameters below.
%

%\newcommand{\intbox}{{\setlength{\unitlength}{.33mm}\framebox(4, 7){}\, }}
%%%%%%%%%%%%%%%%%%%%%%%%%%%%%%%%%%%%%%%%%%%%%%%%%%

\title{Root Isolation of Zero-dimensional\\ Polynomial Systems with \\Linear Univariate Representation$^1$}
\author{Jin-San Cheng, Xiao-Shan Gao, Leilei Guo}

%\thefootnote{Corresponding author: xgao@mmrc.iss.ac.cn}
\address{
KLMM, Institute of Systems Science, AMSS, Chinese Academy of Sciences\\
Email: xgao@mmrc.iss.ac.cn,jcheng@amss.ac.cn}

%\author{Email: xgao@mmrc.iss.ac.cn,jcheng@amss.ac.cn}
%\author{Xiao-Shan Gao}
%\address{
%KLMM, Institute of Systems Science, AMSS, Chinese Academy of
%Sciences, China }
%\author{Leilei Guo}
%\address{
%Institute of Systems Science, AMSS, Chinese Academy of
%Sciences, China }

\footnotetext[1]{Partially supported by a National Key Basic
Research Project of China and by a grant from NSFC.}

\begin{abstract}
In this paper, a linear univariate representation for the roots of a
zero-dimensional polynomial equation system is presented, where the
roots of the equation system are represented as linear combinations
of roots of several univariate polynomial equations. The main
advantage of this representation is that the precision of the roots
can be easily controlled. In fact, based on the linear univariate
representation, we can give the exact precisions needed for
isolating the roots of the univariate equations in order to obtain
the roots of the equation system to a given precision. As a
consequence, a root isolation algorithm for a zero-dimensional
polynomial equation system can be easily derived from its linear
univariate representation.
\end{abstract}
\begin{keyword}
Zero-dimensional polynomial system,
 linear univariate representation, local generic position, root isolation
\end{keyword}

\end{frontmatter}

\section{Introduction}
Solving polynomial equation systems is a basic problem in the field
of computational science and has important engineering applications.
In most cases, we consider zero-dimensional polynomial systems. We
will discuss how to solve this kind of systems in this paper. In
particular, we will consider how to isolate the complex roots for
such a system.

One of the basic methods to solve polynomial equation systems is
based on the concept of separating elements, which can be traced
back to Kronecker \cite{kronecker} and has been studied extensively
in the past twenty years
\cite{abrw,canny,chengjsc,gaochou,gm,gh,gls,kmh,kff,ll,renegar,rur,ynt}.
The idea of the method is to introduce a new variable $t=\sum_i
c_ix_i$ which is a linear combination of the variables to be solved
such that $t=\sum_i c_ix_i$ takes different values when evaluated
at different roots of the polynomial equation system
$\mathcal{P}=0$. In such a case, we say that $t$ is a {\bf
separating element} for $\mathcal{P}=0$. If $t=\sum_i c_ix_i$ is a
{separating element} for $\mathcal{P}=0$,  the roots of
$\mathcal{P}=0$ have the following rational univariate
representation (RUR):
 $$f(t)=0, x_i=R_i(t), i=1,...,n,$$
where $f\in \Q[t]$ and $R_i(t)$ are rational functions in $t$. As a
consequence, solving multi-variate equation systems is reduced to
solving a univariate equation $f(t)=0$ and to substituting the roots
of $f(t)=0$ into rational functions $R_i(t)$.
Along this line, better complexity bounds and effective software
packages for solving polynomial equations such as the Maple package
RootFinding by Rouillier \cite{rur} and the Magma package Kronecker
by Giusti, Lecerf, and Salvy \cite{gls} are given.

The above approaches still have the following problem: for an
isolation interval $[a,b]$ of a real root $\alpha$ of $f(u)=0$, to
determine the isolation interval of $x_i=R_i(\alpha)$ under a given
precision is not a trivial task.
In this paper, we propose a new representation for the roots of a
polynomial system which will remedy this drawback.

\begin{figure}[ht]
\centering
\begin{minipage}{0.9\textwidth}
\centering
 \includegraphics[scale=0.32]{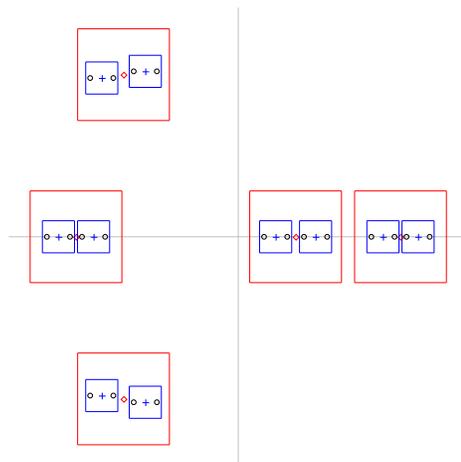}
 \caption{The distribution of the roots of $T_i(x)=0 (i=1,2,3)$. The red diamonds (blue crosses, black circles) are roots of $T_1(x)=0$ ($T_2(x)=0,\,T_3(x)=0$) and red (blue) boxes are neighborhoods for the red diamonds (blue crosses).}
 %The vertical line is imaginary axis and the horizontal line is real axis.}
\label{fig-slope}
\end{minipage}
\end{figure}

%The idea of the method is to project the roots of a polynomial
%system to a local generic position instead of a generic position.
%
In the ISSAC paper \cite{lgp-bi}, based on ideas similar to
separating elements, a local generic position method is introduced
to solve bivariate polynomial systems and experimental results show
that the method is quite efficient for solving equation systems with
multiple roots.
In this paper, we extend the method to solve general
zero-dimensional polynomial systems. A local generic position for a
polynomial equation system $\PS=0$ is also a linear combination of
the variables to be solved: $t=\sum_i c_ix_i$ which satisfies two
conditions.
First,  $t_k=\sum_{i=1}^k c_ix_i$ is a separating element of
$\mathcal{P}_k =(\mathcal{P})\cap\Q[x_1,\ldots,x_k]$ for
$k=2,\ldots,n$, and the roots of $\mathcal{P}_k=0$ have a one-to-one
correspondence with the roots of a univariate equation $T_k(t_k)=0$.
Second, for a root $\xi=(\xi_1,\ldots,\xi_k)$ of $\mathcal{P}_k=0$
represented by a root $\eta$ of $T_k(t_k)=0$, all the roots $\eta_j$
of $T_{k+1}(t_{k+1})=0$ corresponding to the roots of
$\mathcal{P}_{k+1}=0$, say $\xi_j=(\xi,\xi_{k+1,j})$, ``lifted" from
$\xi$ are projected into a fixed square neighborhood of $\eta$. This
``local" property is illustrated in Figure \ref{fig-slope}.
We prove that if $t_n=\sum_{i=1}^n c_ix_i$ is a local generic
position for $\mathcal{P}$, then the roots of $\mathcal{P}=0$ can be
be represented as special linear combinations of the roots of
univariate equations $T_k(t_k)=0,k=1,\ldots,n$:
 $$\{ (\alpha_1,\frac{\alpha_2-\alpha_1}{s_1},\ldots,\frac{\alpha_n-\alpha_{n-1}}{s_1\cdots s_{n-1}})\,|\, T_k(\alpha_k)=0 \},$$
where $s_j$ are certain positive rational numbers and the
$\alpha_{j+1}$ matching $\alpha_j$ are in certain square
neighborhood of $\alpha_j$ to be defined in Section 2.
Such a representation is called a {\bf linear univariate
representation} (LUR for short)  of the polynomial system.

The main advantage of the LUR representation is that the precision
of the roots can be easily controlled. For RUR, computing solutions
with a given precision is not a trivial task as we mentioned before.
It is not easy to know with which precision to isolate the roots of
$f(t)=0$ is enough in order for the roots of the system $x_i=R_i(t)$
to satisfy a given precision.
%
%Precision control in the sleeve based  method is also not trivial
%\cite{chengjsc}.
%
For LUR, precision control becomes very easy. We can give an
explicit formula for the precision of the roots of $T_i(x)=0$ in
order to obtain the roots of the system with a given precision. So
we can obtain the solutions of the system by refining the roots of
$T_i(x)=0$ at most once. Another advantage of LUR is that when we
isolate the roots of $T_{i+1}(x)=0$, we need only to consider a
fixed neighborhood of each root of $T_i(x)=0$.

We propose an algorithm to compute an LUR for a zero-dimensional
polynomial system. The key ingredients of the algorithm are to
estimate the root bounds of $\PS=0$ and to estimate the separating
bounds for the roots of $\PS_{k+1}=0$ lifted from a root of
$\PS_{k}=0$. We adopt a computational approach to estimate such
bounds in order to obtain tight bound values. For the root bounds of
$\PS=0$, we use Gr\"obner basis computation to obtain the generating
polynomial of the principal ideal $(\PS)\cap\Q[x_i]$ and use this
polynomial to estimate the root bound for the $x_i$ coordinates of
the roots of $\PS=0$.
The separating bounds for $\PS_k=0$ are obtained from the isolating
boxes for the roots of the $T_k(x)=0$. These bounds in turn will be
used to compute the isolating boxes for the roots of $\PS_{k+1}=0$.
Hence, the algorithm to compute an LUR also gives a set of isolating
boxes for the roots of $\PS=0$.

In Section 2, we give the definition of LUR and the main result of
the paper.
In Section 3, we present an algorithm to compute an LUR of a
zero-dimensional polynomial system as well as a set of isolation
boxes of the roots of the equation system.
In Section 4, we provide some illustrative examples.
We conclude the paper in Section 5.

\section{Linear univariate representation}
In this section, we will define LUR and prove its main properties.
Let
$$\mathcal{P} = \{f_1(x_1,\ldots,x_n),\ldots, f_s(x_1,\ldots,x_n) \}$$
be a zero-dimensional polynomial system in $\Q[x_1,\ldots,x_n]$,
where $\Q$ is the field of rational numbers. Let
 $$\IS_i = (\mathcal{P})\cap\Q[x_1,\ldots,x_i],i=1,\ldots,n,$$
where $(\PS)$ is the ideal generated by $\PS$.
We use $V_{\C}(\mathcal{P})$ to denote its complex roots in $\C^n$.

Since we will use rectangles to isolate complex numbers, we adopt
the following norm for a complex number $c = x + yi$:
\begin{eqnarray}\label{eq-dis}|c | = \max\{|x|,|y|\}.\end{eqnarray}
The ``distance\footnote{The results in this section are also valid
if we use the usual distance for complex numbers.}" between two
complex numbers $c_1$ and $c_2$ is defined to be $|c_1-c_2|$. It is
easy to check that this is indeed a distance satisfying the
inequality $|c_1-c_2|\le |c_1-c_3| + |c_3-c_2|$ for any complex
number $c_3$.
Let $c_0$ be a complex number and $r$ a positive rational number.
Then the set of points having distance less than $r$ with $c_0$,
that is $\{ c_1\in\C \, |\, |c_1-c_0|< r \}$, is an open square with
$c_0$ as the center.

By an LUR, we mean a set like
\begin{eqnarray}\label{eq-lur}
 \{T_1(x),\ldots,T_n(x), s_i,d_i, i=1,\ldots,n-1\},
\end{eqnarray}
where $T_i(x)\in \Q[x]$ are univariate polynomials, $s_i$ and $d_i$
are positive rational numbers. The {\bf roots} of \bref{eq-lur} are
defined to be
\begin{eqnarray}\label{eq-rt}
&&
\{(\alpha_1,\frac{\alpha_2-\alpha_1}{s_1},\ldots,\frac{\alpha_n-\alpha_{n-1}}{s_1\cdots\,s_{n-1}})
 \, |\, T_i(\alpha_i)=0, i=1,\ldots,n \hbox{ and }\nonumber\\
&& |\alpha_{i+1}-\alpha_i|<s_1\cdots s_{i-1}d_i, i=1,\ldots,n-1 \}
\end{eqnarray}
where $s_0=1$. Geometrically, we match a root $\alpha_i$ of
$T_{i}(x)=0$ with those roots of $T_{i+1}(x)=0$ inside a squared
neighborhood centered at $\alpha_i$. See Figure \ref{fig-slope} for
an illustration.
An {\bf LUR for $\mathcal{P}$} is a set of form \bref{eq-lur} whose
roots are exactly the roots of $\mathcal{P}=0$.

It is clear that an LUR represents the roots of $\mathcal{P}$ as
linear combinations of the roots of some univariate polynomial
equations. The LUR representation has the following advantage: we
can easily derive the precision of the roots of $\mathcal{P}=0$ from
that of the univariate equations as shown by the following lemma.

\begin{lem}
Let \bref{eq-lur} be an LUR for a polynomial system $\PS=0$. If
$\alpha_i$ is a root of $T_i(x)=0 (1\le i\le n)$ and
$\overline{\alpha}_i$ is an approximation of $\alpha_i$ with
precision $\epsilon_i$, then the approximate root
$(\overline{\alpha}_1,\frac{\overline{\alpha}_2-\overline{\alpha}_1}{s_1},\ldots,$
$\frac{\overline{\alpha}_n-\overline{\alpha}_{n-1}}{s_1\cdots\,s_{n-1}})$
of $\mathcal{P}=0$ has precision $\max\{\epsilon_1,
\frac{\epsilon_2+\epsilon_1}{s_1},\ldots,\frac{\epsilon_n+\epsilon_{n-1}}{s_1\cdots\,s_{n-1}}\}$.
\end{lem}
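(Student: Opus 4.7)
The plan is to prove the lemma by a direct coordinate-by-coordinate error estimate, using the triangle inequality under the max norm defined in \bref{eq-dis}.

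First I would unwind the definitions. The true root of $\PS=0$ corresponding to the chosen matching is
$\beta = (\alpha_1, \frac{\alpha_2-\alpha_1}{s_1},\ldots,\frac{\alpha_n-\alpha_{n-1}}{s_1\cdots s_{n-1}})$,
and the approximation constructed from the $\overline{\alpha}_i$ is
$\overline{\beta} = (\overline{\alpha}_1, \frac{\overline{\alpha}_2-\overline{\alpha}_1}{s_1},\ldots,\frac{\overline{\alpha}_n-\overline{\alpha}_{n-1}}{s_1\cdots s_{n-1}})$.
The precision of $\overline{\beta}$ as an approximation of $\beta$ is by definition the maximum over the coordinates of the distance $|\overline{\beta}_k - \beta_k|$, measured in the norm \bref{eq-dis}.

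Next I would estimate the $k$-th coordinate error. For $k=1$ the bound is immediate: $|\overline{\alpha}_1 - \alpha_1|\le\epsilon_1$. For $k\ge 2$, the $k$-th coordinate of $\overline{\beta}-\beta$ equals
$\frac{(\overline{\alpha}_k-\alpha_k)-(\overline{\alpha}_{k-1}-\alpha_{k-1})}{s_1\cdots s_{k-1}}$,
and the triangle inequality stated just after \bref{eq-dis} gives
$|\overline{\beta}_k-\beta_k|\le \frac{|\overline{\alpha}_k-\alpha_k|+|\overline{\alpha}_{k-1}-\alpha_{k-1}|}{s_1\cdots s_{k-1}}\le\frac{\epsilon_k+\epsilon_{k-1}}{s_1\cdots s_{k-1}}$.
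Taking the maximum over $k=1,\ldots,n$ yields exactly the bound claimed in the statement.

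There is essentially no obstacle: the only thing to be careful about is that the ``precision'' notation must be interpreted consistently with the max norm on $\C$ so that the vector precision is the coordinate-wise maximum; once this is granted, the lemma is a one-line application of the triangle inequality on each coordinate separately. No use of the LUR hypothesis beyond the algebraic form of the root coordinates is needed, so the proof applies verbatim to any tuple of the form $(\alpha_1,(\alpha_2-\alpha_1)/s_1,\ldots)$, whether or not it is actually a root of $\PS=0$.
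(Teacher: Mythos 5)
Your proof is correct and follows essentially the same route as the paper's: decompose coordinate-by-coordinate, bound $|\overline{x}_i - x_i|$ by $(\epsilon_i + \epsilon_{i-1})/(s_1\cdots s_{i-1})$ via the triangle inequality, and take the maximum. Your write-up is more explicit about the norm being used and about why the LUR hypothesis is not actually needed, but the argument is the same one-liner.
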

\begin{pf} Since $x_i
=\frac{\alpha_i-\alpha_{i-1}}{s_1\cdots\,s_{i-1}}$ and the
approximate root $\overline{\alpha}_i$ of $\alpha_i$ has precision
$\epsilon_i$,  the approximate root $\overline{x}_i
=\frac{\overline{\alpha}_i-\overline{\alpha}_{i-1}}{s_1\cdots\,s_{i-1}}$
has precision no larger than
$\frac{\epsilon_i+\epsilon_{i-1}}{s_1\cdots\,s_{i-1}}$.
\end{pf}

For a zero-dimensional polynomial system $\mathcal{P}$, let $d_i,
r_{i}$ ($i=1,\ldots,n$), and $s_i$ ($i=1,\ldots,n-1$) be positive
rational numbers satisfying
\begin{eqnarray}
 D_i&=&\,\min\{\frac{1}{2}|\alpha-\beta|, \forall \eta\in V_{\C}(\IS_{i-1}),
   (\eta,\alpha), (\eta,\beta)\in V_{\C}(\IS_i), \alpha\ne\beta\},\label{eq-dn0}\\
 d_i&<&\,\min\{D_i, \frac{d_{i-1}}{2s_{i-1}}\},\label{eq-dn1}\\
 r_{i}&>&2\max\{|\alpha_{i}|, \forall (\alpha_1,\ldots,\alpha_{i})\in V_{\C}(\IS_{i}) \}, \label{eq-dn2}\\
 s_i&\le&\frac{d_i}{r_{i+1}}\label{eq-dn3}
\end{eqnarray}
where $s_0=1, d_0 = +\infty$.
Geometrically, $D_i$ is half of the root separation bound for roots
of $\IS_i$ considered as points on a ``fiber" over each root of
$\IS_{i-1}$, $r_i$ is twice the root bound for the $i$-th
coordinates of the roots of $\IS_i$, and $s_i$, the inverse of the
slope of certain line, is a key parameter to be used in our method.
If $\forall \eta\in V_{\C}(\IS_{i-1})$, $\#\{\alpha|(\eta,\alpha)\in
V_{\C}(\IS_i)\}=1$, we can choose any positive number as $d_i$.

For $s_i$ satisfying \bref{eq-dn3}, consider the ideal
\begin{equation}\label{eq-bi}
 \bar{\IS_i} = (\IS_i\cup\{x - x_1 - s_1 x_2 - \cdots -
 s_1\cdots s_{i-1} x_i\}),\end{equation}
 where $x$ is a new variable.
%We need also to choose $s_i$ to ensure that
%%there is no polynomial $f\in \IS_i$ such that $h_{i}=x-x_1-s_1x_2-\ldots-s_1\cdots s_{i-1}\,x_{i}\, mod\, f \in \Q[x]$.
%\begin{eqnarray}\label{eq-dnm}
%\nexists f\in \IS_i,\,\,\, s.t.\,\,\, x - x_1 - s_1 x_2 - \cdots - s_1\cdots s_{i-1} x_i\,\,\, mod\,\,\, f\in \Q[x].
%\end{eqnarray}
%
It is clear that $\bar{\IS_i}$ is a zero-dimensional ideal in
$\Q[x_1,\ldots,x_i,x]$. And the elimination ideal
$(\bar{\IS_i})\cap\Q[x]$ is principal. Let $T_i(x)$ be the generator of this
ideal:
\begin{eqnarray}\label{eq-tii}
(\bar{\IS_i})\cap\Q[x] = (T_i(x)).
\end{eqnarray}
The following is the main result of this paper.
\begin{thm}\label{th-1}
If $d_i,  s_i$ satisfy conditions \bref{eq-dn1}, \bref{eq-dn3} and
$T_i$ is defined in \bref{eq-tii}, then the corresponding set
\bref{eq-lur} is an LUR for $\mathcal{P}$.
\end{thm}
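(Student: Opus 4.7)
The plan is to prove the two-way containment between $V_{\C}(\PS)$ and the root set defined by \bref{eq-rt}. For the forward direction, given $\xi=(\xi_1,\ldots,\xi_n)\in V_{\C}(\PS)$ I would set $\alpha_i:=\xi_1+s_1\xi_2+\cdots+s_1\cdots s_{i-1}\xi_i$ and verify two conditions: $T_i(\alpha_i)=0$, which is immediate from \bref{eq-tii} because $(\xi_1,\ldots,\xi_i,\alpha_i)\in V_{\C}(\bar{\IS_i})$; and $|\alpha_{i+1}-\alpha_i|=s_1\cdots s_i|\xi_{i+1}|<s_1\cdots s_{i-1}d_i/2$, which follows from the root bound $|\xi_{i+1}|<r_{i+1}/2$ combined with $s_ir_{i+1}\le d_i$ from \bref{eq-dn3}. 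The tuple reconstructed from $(\alpha_1,\ldots,\alpha_n)$ via \bref{eq-rt} is exactly $\xi$, so every root of $\PS=0$ occurs as an LUR root.

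For the reverse direction, given $(\alpha_1,\ldots,\alpha_n)$ with $T_i(\alpha_i)=0$ and $|\alpha_{i+1}-\alpha_i|<s_1\cdots s_{i-1}d_i$, I would use \bref{eq-tii} to pick $\xi=(\xi_1,\ldots,\xi_n)\in V_{\C}(\IS_n)=V_{\C}(\PS)$ with $\alpha_n=\xi_1+s_1\xi_2+\cdots+s_1\cdots s_{n-1}\xi_n$, set $\beta_i:=\xi_1+s_1\xi_2+\cdots+s_1\cdots s_{i-1}\xi_i$, and prove $\alpha_i=\beta_i$ by descending induction on $i$ starting from $\alpha_n=\beta_n$. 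The triangle inequality together with the LUR hypothesis and the forward estimate for $|\beta_{i+1}-\beta_i|$ gives
\[
|\alpha_i-\beta_i|\le|\alpha_i-\alpha_{i+1}|+|\beta_{i+1}-\beta_i|<\frac{3}{2}\,s_1\cdots s_{i-1}d_i.
\]

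The heart of the argument is to turn this upper bound into $\alpha_i=\beta_i$. If instead $\alpha_i\ne\beta_i$, then by \bref{eq-tii} we have $\alpha_i=t_i(\xi')$ for some $\xi'\in V_{\C}(\IS_i)$ distinct from $(\xi_1,\ldots,\xi_i)$; let $k$ be the smallest index with $\xi'_k\ne\xi_k$ and expand
\[
t_i(\xi')-t_i(\xi_1,\ldots,\xi_i)=s_1\cdots s_{k-1}(\xi'_k-\xi_k)+\sum_{j=k+1}^{i}s_1\cdots s_{j-1}(\xi'_j-\xi_j).
\]
The leading term has modulus at least $2\,s_1\cdots s_{k-1}D_k$ by \bref{eq-dn0}, because $\xi'$ and $(\xi_1,\ldots,\xi_i)$ agree on the first $k-1$ coordinates. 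For the tail I would use $|\xi'_j|,|\xi_j|<r_j/2$ together with the telescoping inequalities $s_jr_{j+1}\le d_j$ from \bref{eq-dn3} and $s_{j-1}d_j<d_{j-1}/2$ from \bref{eq-dn1}; these let me bound $\sum_{j>k}s_1\cdots s_{j-1}|\xi'_j-\xi_j|$ by the geometric sum $s_1\cdots s_{k-1}d_k\bigl(2-2^{1-(i-k)}\bigr)$.

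Combining the two sides: for $k=i$ the direct bound $|\alpha_i-\beta_i|\ge 2\,s_1\cdots s_{i-1}D_i>2\,s_1\cdots s_{i-1}d_i$ already exceeds the upper bound $(3/2)\,s_1\cdots s_{i-1}d_i$; for $k<i$ the lower bound reduces to $s_1\cdots s_{k-1}d_k\cdot 2^{1-(i-k)}$, while the same telescoping applied to $s_k\cdots s_{i-1}d_i$ shows the upper bound is strictly below $(3/2)\,s_1\cdots s_{k-1}d_k\cdot 2^{-(i-k)}$, so the resulting ratio $4/3$ delivers the contradiction. The main obstacle I expect is the bookkeeping of these telescoping estimates: the margin $D_k>d_k$ from \bref{eq-dn1} is tight, and the constants $3/2$ and $4/3$ must line up exactly so that the argument closes uniformly for every $k\le i$ and every $i$.
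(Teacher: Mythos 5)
Your proof is correct, but it takes a genuinely different route from the paper's. The paper argues by \emph{forward} induction on $i$ through Lemma~\ref{lm-ww1}: it first establishes the geometric picture (Lemma~\ref{lm-ww0}) that the squares $\mathbb{S}_{\eta_{i+1}}$ nest inside $\mathbb{S}_{\eta_i}$, then proves four statements simultaneously by induction --- that $x$ is a separating element of $\IS_i$, that each root of $T_{i+1}$ lands in exactly one parent square and has the lifted form \bref{eq-ww2}, that the squares are pairwise disjoint, and finally the LUR property. The disjointness of squares is the mechanism that forces a root of $T_{i+1}$ sitting inside $\mathbb{S}_{\eta_i}$ to come from a lift of the specific $\xi$ corresponding to $\eta_i$. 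You instead prove the two inclusions directly, and for the reverse one you run a \emph{descending} induction from $i=n$: having pinned $\xi$ by $\alpha_n=t_n(\xi)$, you show $\alpha_i=t_i(\xi)$ by sandwiching $|\alpha_i-\beta_i|$ between the coarse upper bound $\tfrac{3}{2}\rho_i$ (from the LUR matching condition plus the lift estimate) and a lower bound obtained by separating out the first disagreeing coordinate $k$ and telescoping the tail with $s_{j-1}r_j\le d_{j-1}$ and $\rho_{j+1}<\rho_j/2$; the resulting $4/3$ margin closes the contradiction uniformly in $k$ and $i$. I checked the telescoping arithmetic and the base and boundary cases ($k=i$, $k=1$) and the argument does close. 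The trade-off is instructive: the paper's forward route produces the auxiliary geometric facts (nesting and disjointness of the $\mathbb{S}_{\eta_i}$, statements (a)--(c)) that are reused later in Lemmas~\ref{lm-rr1}--\ref{lm-rr2} and the algorithm, whereas your argument is more self-contained and quantitative but would have to be supplemented if those facts are needed downstream. One cosmetic note: you cite \bref{eq-dn0} for the leading-term lower bound and implicitly use $D_k>d_k$ from \bref{eq-dn1}; since the theorem statement only lists \bref{eq-dn1} and \bref{eq-dn3}, it is worth remarking that $D_k$ and $r_{i}$ are the quantities these conditions refer to.
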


We will prove two lemmas which will lead to a proof for the theorem.
For a root $\eta_i$ of $T_i(x)=0$, let
\begin{eqnarray}\label{eq-nbh}
\mathbb{S}_{\eta_{i}}&=& \{\eta\in\C\, | \, |\eta-\eta_i|<\rho_i \},
i=1,\ldots,n
\end{eqnarray}
where $\rho_i=s_1\cdots s_{i-1}d_{i}, (s_0=1)$.
Note that $\mathbb{S}_{\eta_{i}}$ is an open square whose center is
$\eta_i$ and whose edge has length $2\rho_i$.
With this notation, the {roots} of \bref{eq-lur} can be written as
\begin{eqnarray}\label{eq-rt1}
&&
\{(\alpha_1,\frac{\alpha_2-\alpha_1}{s_1},\ldots,\frac{\alpha_n-\alpha_{n-1}}{s_1\cdots\,s_{n-1}})
 \, |\, T_i(\alpha_i)=0, i=1,\ldots,n \hbox{ and }\nonumber\\
&& \alpha_{i+1}\in\mathbb{S}_{\alpha_{i}}, i=1,\ldots,n-1 \}
\end{eqnarray}
In Figure \ref{fig-slope}, $\mathbb{S}_{\eta_{i}}$ are interior
parts of the squares.
We have
\begin{lem}\label{lm-ww0}
Under assumptions of Theorem \ref{th-1}, we have
$\mathbb{S}_{\eta_{i+1}} \subset \mathbb{S}_{\eta_{i}},
$i=1,\ldots,n-1$,$ where $(\xi_1,\ldots,$ $\xi_{i+1})\in
V_{\C}(\IS_{i+1})$ and
\begin{eqnarray}
\eta_{i}&=&\xi_1+s_1\xi_2+\cdots+s_1\cdots
s_{i-1}\,\xi_{i},\label{eq-ww4}\\
\eta_{i+1}&=&\xi_1+s_1\xi_2+\cdots+s_1\cdots
s_{i-1}\,\xi_{i}+s_1\cdots
 s_{i}\,\xi_{i+1} = \eta_{i} +s_1\cdots
 s_{i}\,\xi_{i+1}.\label{eq-ww3}
\end{eqnarray}
\end{lem}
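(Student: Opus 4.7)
The plan is to fix an arbitrary point $\alpha \in \mathbb{S}_{\eta_{i+1}}$ and verify directly that $\alpha \in \mathbb{S}_{\eta_{i}}$ via the triangle inequality for the max-norm distance $|\cdot|$ defined in \bref{eq-dis}. By the definition in \bref{eq-nbh}, this amounts to showing
\[
|\alpha - \eta_{i+1}| < \rho_{i+1} \quad \Longrightarrow \quad |\alpha - \eta_i| < \rho_i,
\]
where $\rho_j = s_1 \cdots s_{j-1} d_j$. The triangle inequality gives $|\alpha-\eta_i| \le |\alpha - \eta_{i+1}| + |\eta_{i+1}-\eta_i|$, so it suffices to bound each of the two terms on the right by $\rho_i / 2$.

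For the first term, using \bref{eq-dn1} in the form $d_{i+1} < d_i/(2 s_i)$, I multiply through by $s_1 \cdots s_i$ to get
\[
\rho_{i+1} = s_1 \cdots s_i \, d_{i+1} < \tfrac{1}{2} \, s_1 \cdots s_{i-1} \, d_i = \tfrac{\rho_i}{2}.
\]
For the second term, by \bref{eq-ww3} we have $\eta_{i+1} - \eta_i = s_1 \cdots s_i \,\xi_{i+1}$. The root bound \bref{eq-dn2} gives $|\xi_{i+1}| < r_{i+1}/2$, and the slope bound \bref{eq-dn3} gives $s_i \le d_i / r_{i+1}$, so
\[
|\eta_{i+1} - \eta_i| = s_1 \cdots s_i \, |\xi_{i+1}| < s_1 \cdots s_{i-1} \cdot \tfrac{d_i}{r_{i+1}} \cdot \tfrac{r_{i+1}}{2} = \tfrac{\rho_i}{2}.
\]
Combining the two estimates yields $|\alpha - \eta_i| < \rho_i/2 + \rho_i/2 = \rho_i$, as required.

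There is no real obstacle here: the statement is a purely metric containment, and once the right expression for $\eta_{i+1}-\eta_i$ is extracted from \bref{eq-ww3}, everything reduces to checking that the data $d_i,s_i,r_i$ have been chosen so that the inward displacement from $\eta_i$ to $\eta_{i+1}$ and the radius $\rho_{i+1}$ each swallow at most half of $\rho_i$. The only point requiring some attention is keeping track of the common factor $s_1 \cdots s_{i-1}$, which is what allows the factors in \bref{eq-dn1} and \bref{eq-dn3} to line up neatly into the bound $\rho_i/2$. The quantity $D_i$ from \bref{eq-dn0} is not needed for this lemma; it will come into play only when proving that distinct roots of $T_{i+1}$ produce disjoint neighborhoods inside a common $\mathbb{S}_{\eta_i}$.
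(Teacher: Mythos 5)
Your proof is correct and takes essentially the same approach as the paper: bound $|\eta_{i+1}-\eta_i|<\rho_i/2$ via \bref{eq-dn2} and \bref{eq-dn3}, bound $\rho_{i+1}<\rho_i/2$ via \bref{eq-dn1}, then combine with the triangle inequality for the max-norm distance. Your closing remark that $D_i$ in \bref{eq-dn0} is not needed here is also accurate; it is used later, exactly as you say.
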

\begin{proof}
From the definition of $\bar{\IS_i}$ in \bref{eq-bi}, $\eta_{i}$ is
a root of $T_i(x)=0$, $\eta_{i+1}$ is a root of $T_{i+1}(x)=0$, and
each root of $T_{i+1}(x)=0$ has the form \bref{eq-ww3}.

We first prove that  $\eta_{i+1}\in \mathbb{S}_{\eta_{i}}$.
Using \bref{eq-dn2} and \bref{eq-dn3}, we have
\begin{equation}\label{eq-ww1}
 |\eta_{i+1}-\eta_{i}|=s_1\cdots s_{i}|\xi_{i+1}|
 <\frac{1}{2}s_1\cdots s_{i}r_{i+1} \le \frac{1}{2}s_1\cdots
s_{i-1}d_{i} = \frac{1}{2}\rho_i.\end{equation}
As a consequence, $\eta_{i+1}$ is in $\mathbb{S}_{\eta_{i}}$.

We now prove that $\mathbb{S}_{\eta_{i+1}}\subset
\mathbb{S}_{\eta_{i}}$. By \bref{eq-dn1}, we have $\rho_{i+1} =
s_1\cdots s_{i}d_{i+1} < \frac{1}{2}  s_1\cdots s_{i-1}d_{i} =
\frac{1}{2}\rho_i$. Therefore, for any $\eta\in
\mathbb{S}_{\eta_{i+1}}$, by \bref{eq-ww1}, we have $|\eta-\eta_i|
\le |\eta-\eta_{i+1}| + |\eta_{i+1}-\eta_i| < \rho_{i+1} +
\frac{1}{2}\rho_i < \rho_i$. Hence $\eta\in \mathbb{S}_{\eta_{i}}$
and the lemma is proved.
\end{proof}

For rational numbers $a_j$, we call $f_i=\Sigma_{j=1}^i a_j x_j$ a
{\bf separating element} of $\IS_i$, if $\forall \alpha,\beta\in
V_{\mathbb{C}}(\IS_i)$,  $\alpha\ne\beta$ implies $f_i(\alpha)\neq
f_i(\beta)$ (see paper \cite{rur}).

Theorem \ref{th-1} follows from (d) of the following lemma.
\begin{lem}\label{lm-ww1}
Under assumptions of Theorem \ref{th-1}, for $i=1,\ldots,n$, we have
\begin{description}
\item[] (a) $x=x_1+s_1\,x_2+\cdots+s_1\cdots s_{i-1}x_i$ is a separating element of
$\IS_i$.

\item[] (b) Each root $\eta_{i}$ of $T_{i}(x)=0$ is in an $\mathbb{S}_{\eta_{i-1}}$
for a root $\eta_{i-1}$ of $T_{i-1}(x)=0$. Furthermore, if
$\eta_{i-1}=\xi_1+s_1\,\xi_2+\cdots+s_1\cdots s_{i-2}\xi_{i-1}$,
then all roots of $T_{i}(x)=0$ in $\mathbb{S}_{\eta_{i-1}}$ are of
the following form
\begin{equation}\label{eq-ww2} \eta_{i}=\eta_{i-1} +s_1\cdots
 s_{i-1}\,\xi_{i}\end{equation}
where $(\xi_1,\ldots,\xi_{i-1},\xi_{i})\in V_{\C}(\IS_{i})$.

\item[] (c) $\mathbb{S}_{\eta_{i}}$ are disjoint for all roots $\eta_i$ of
$T_i(x)=0$.

\item[] (d) $(T_1(x),\ldots,T_i(x), s_j, d_j, j=1,\dots,i-1)$ is an LUR for $\IS_i$.
\end{description}
\end{lem}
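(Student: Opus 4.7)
The plan is to prove (a)--(d) simultaneously by induction on $i$, establishing them in the order (a)$\,\to\,$(b)$\,\to\,$(c)$\,\to\,$(d) within each inductive step. The base case $i=1$ is direct: since $\bar{\IS_1}=(\IS_1\cup\{x-x_1\})$, the elimination ideal $(\bar{\IS_1})\cap\Q[x]=(T_1(x))$ coincides with the principal generator of $\IS_1$ under the renaming $x\leftrightarrow x_1$, so (a) is trivial, (b) is vacuous, (c) follows from $d_1<D_1$ (which forces $|\eta_1-\eta_1'|\ge 2D_1>2d_1=2\rho_1$ for distinct roots), and (d) holds because the LUR-roots at level $1$ are exactly the roots of $T_1$.

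For the inductive step, assume (a)--(d) at level $i$. To prove (a) at level $i+1$, take distinct $\alpha,\beta\in V_\C(\IS_{i+1})$. If their first $i$ coordinates agree, then $|\alpha_{i+1}-\beta_{i+1}|\ge 2D_{i+1}>0$ by the definition of $D_{i+1}$, and multiplying by $s_1\cdots s_i$ gives distinct values of the new linear form. Otherwise, letting $\eta_i,\eta_i'$ be the images of the two prefixes under the level-$i$ form, (a) and (c) at level $i$ give disjoint open squares $\mathbb{S}_{\eta_i},\mathbb{S}_{\eta_i'}$ and hence $|\eta_i-\eta_i'|\ge 2\rho_i$. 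The extra term $s_1\cdots s_i(\alpha_{i+1}-\beta_{i+1})$ has modulus strictly less than $s_1\cdots s_i\,r_{i+1}\le s_1\cdots s_{i-1}d_i=\rho_i$ by \bref{eq-dn2} and \bref{eq-dn3}, so the two values of the new form differ by more than $\rho_i>0$.

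For (b) at level $i+1$, the relation $(\bar{\IS_{i+1}})\cap\Q[x]=(T_{i+1}(x))$ writes every root of $T_{i+1}$ as $\eta_{i+1}=\xi_1+s_1\xi_2+\cdots+s_1\cdots s_i\xi_{i+1}$ for some $(\xi_1,\ldots,\xi_{i+1})\in V_\C(\IS_{i+1})$; Lemma \ref{lm-ww0} then places $\eta_{i+1}$ in $\mathbb{S}_{\eta_i}$ where $\eta_i=\xi_1+\cdots+s_1\cdots s_{i-1}\xi_i$, and disjointness (c) at level $i$ makes this parent unique. The ``furthermore'' clause follows from (a) at level $i$, which identifies $(\xi_1,\ldots,\xi_i)$ uniquely from $\eta_i$. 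For (c) at level $i+1$, I split distinct roots $\eta_{i+1},\eta_{i+1}'$ of $T_{i+1}$ by parent: same parent yields $|\eta_{i+1}-\eta_{i+1}'|=s_1\cdots s_i|\xi_{i+1}-\xi_{i+1}'|>2s_1\cdots s_id_{i+1}=2\rho_{i+1}$ via $D_{i+1}$; distinct parents reduce to (c) at level $i$ using the nesting $\mathbb{S}_{\eta_{i+1}}\subset\mathbb{S}_{\eta_i}$ from Lemma \ref{lm-ww0}.

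For (d), I associate to each $(\xi_1,\ldots,\xi_{i+1})\in V_\C(\IS_{i+1})$ the partial-sum tower $\alpha_j=\xi_1+s_1\xi_2+\cdots+s_1\cdots s_{j-1}\xi_j$ and check via Lemma \ref{lm-ww0} that $\alpha_{j+1}\in\mathbb{S}_{\alpha_j}$ and $(\alpha_{j+1}-\alpha_j)/(s_1\cdots s_j)=\xi_{j+1}$; conversely, given an LUR root I apply (b) at each level to write $\alpha_{j+1}=\alpha_j+s_1\cdots s_j\xi_{j+1}'$ with $(\xi_1',\ldots,\xi_{j+1}')\in V_\C(\IS_{j+1})$, and use injectivity (a) at each level to confirm that the prefixes agree with the previously determined $(\xi_1,\ldots,\xi_j)$, producing a single consistent root of $\IS_{i+1}$. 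The main obstacle is avoiding circularity in the interdependence of (a)--(d) within a single inductive step and carefully tracking which of the inequalities \bref{eq-dn0}--\bref{eq-dn3} are strict, so that the open-square geometry matches the distance bounds in each case split.
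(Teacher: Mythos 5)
Your proof is correct and follows essentially the same inductive scheme as the paper: induction on $i$, proving (a)--(d) in order within each step, with Lemma~\ref{lm-ww0} doing the geometric work of placing $\eta_{i+1}$ in $\mathbb{S}_{\eta_i}$ and establishing the nesting $\mathbb{S}_{\eta_{i+1}}\subset\mathbb{S}_{\eta_i}$. The only deviations are minor and cosmetic: in the distinct-prefix case of (a) you replace the paper's purely set-theoretic appeal to disjointness of $\mathbb{S}_{\eta_i}$, $\mathbb{S}_{\theta_i}$ by an explicit triangle-inequality bound ($|\eta_i-\eta_i'|\ge 2\rho_i$ minus a perturbation $<\rho_i$), and in (d) you spell out the ``conversely'' direction explicitly (using (a) at each level to confirm that the prefixes recovered via (b) cohere), which the paper leaves implicit in the sentence ``we thus proved that the roots of $\IS_{i+1}$ are the same as the roots of the LUR.''
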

\begin{proof}
We will prove the lemma by induction on $k=i$.
For $k=1$, since $(\IS_1) = (T_1(x))$,  statements (a) and (d) are
obviously true. We do not need prove (b).
From \bref{eq-dn1}, we have $d_1<\,\min\{\frac{1}{2}|\alpha-\beta|,
\forall \alpha, \beta\in V_{\C}(\IS_1)= V_{\C}(T_1),
\alpha\ne\beta\}$. As a consequence, $\mathbb{S}_{\eta_{1}}$ are
disjoint for all roots $\eta_1$ of $T_1(x)=0$. Statement (c) is
proved.

Suppose that the result is correct for $k=1,\ldots,i$. We will prove
the result for $k=i+1$.

We first prove statement (a). Let $\xi=(\xi_1,\ldots,\xi_{i+1})$ and
$\beta=(\beta_1,\ldots,\beta_{i+1})$ be two distinct elements in
$V_{\C}(\IS_{i+1})$.
We consider two cases. If $(\xi_1,\ldots,\xi_{i})$ is different from
$(\beta_1,\ldots,\beta_{i})$, then by the induction hypothesis
$\eta_{i}=\xi_1+ s_1\xi_2+\cdots+s_1\cdots s_{i-1} \xi_{i}$ is also
different from $\theta_{i}=\beta_1+ s_1\beta_2+\cdots+ s_1\cdots
s_{i-1} \beta_{i}$. By (c) of the induction hypothesis,
$\mathbb{S}_{\eta_{i}}$ and $\mathbb{S}_{\theta_{i}}$ are disjoint.
By Lemma \ref{lm-ww0},  $\eta_{i+1}=\eta_{i}+s_1\cdots s_{i}
\xi_{i+1}\in \mathbb{S}_{\eta_{i}}$ and
$\theta_{i+1}=\theta_{i}+s_1\cdots s_{i} \beta_{i+1}\in
\mathbb{S}_{\theta_{i}}$. Then, in this case we have
$\eta_{i+1}\ne\theta_{i+1}$.
In the second case, we have
$(\xi_1,\ldots,\xi_{i})=(\beta_1,\ldots,\beta_{i})$. Then,
$\eta_{i}=\theta_{i}$ and $\xi_{i+1}\ne \beta_{i+1}$. It is clear
that $\eta_{i+1} = \eta_{i} +s_1\cdots s_{i} \xi_{i+1}$ is
 different from $\theta_{i+1} = \theta_{i} +s_1\cdots s_{i}
\beta_{i+1}$. Thus, (a) is proved.

We now prove statement (b). Use notations in \bref{eq-ww4} and
\bref{eq-ww3}. By Lemma \ref{lm-ww0}, we have
$\eta_{i+1}\in\mathbb{S}_{\eta_{i}}$. Then, each root of
$T_{i+1}(x)=0$ is in an $\mathbb{S}_{\eta_{i}}$ for a root
$\eta_{i}$ of $T_{i}(x)=0$.
Let $(\beta_1,\ldots,\beta_{i+1})\in V_{\C}(\IS_{i+1})$ such that
$\theta_{i+1}=\beta_1+ s_1\beta_2+\cdots+ s_1\cdots s_{i}
\beta_{i+1}$ is another element in  $\mathbb{S}_{\eta_{i}}$.
We claim that $(\beta_1,\ldots,\beta_{i})$ must be the same as
$(\xi_1,\ldots,\xi_{i})$. Otherwise, by the induction hypothesis
(a), $\theta_{i}=\beta_1+ s_1\beta_2+\cdots+ s_1\cdots s_{i-1}
\beta_{i}$ is different from $\eta_i$. By the induction hypothesis
(c), $\mathbb{S}_{\eta_{i}}$ and  $\mathbb{S}_{\theta_{i}}$ are
disjoint which is impossible since by Lemma \ref{lm-ww0},
$\theta_{i+1}\in\mathbb{S}_{\eta_{i}}$ and
$\theta_{i+1}\in\mathbb{S}_{\theta_{i}}$.
Thus, $(\beta_1,\ldots,\beta_{i})=(\xi_1,\ldots,\xi_{i})$ and hence
$\theta_{i+1}=\eta_i + s_1\cdots s_{i} \beta_{i+1}$. This proves
equation \bref{eq-ww2} and hence statement (b).

We now prove statement (c).
Use notations in \bref{eq-ww4} and \bref{eq-ww3}. By Lemma
\ref{lm-ww0}, $\mathbb{S}_{\eta_{i+1}}\subset
\mathbb{S}_{\eta_{i}}$.
As a consequence, we need only to prove that the squares
$\mathbb{S}_{\eta_{i+1}}$ contained in the same
$\mathbb{S}_{\eta_{i}}$ are disjoint. Let $\eta_{i+1}, \theta_{i+1}$
be two roots of $T_{i+1}(x)=0$ in $\mathbb{S}_{\eta_{i}}$. By
statement (b) just proved, we have
$$ \eta_{i+1}= \eta_i+s_1\cdots s_{i}\xi_{i+1},
   \theta_{i+1}=\eta_i +s_1\cdots s_{i}\beta_{i+1}$$
where $\eta_i$ is defined in \bref{eq-ww4} and
$(\xi_1,\ldots,\xi_i,\xi_{i+1})$, $(\xi_1,\ldots,\xi_i,\beta_{i+1})$
are roots of $\IS_{i+1}$.   Then, by \bref{eq-dn1},
$$|\eta_{i+1}-\theta_{i+1}|=s_1\cdots s_{i}|\xi_{i+1}-\beta_{i+1}|>2\,s_1\cdots s_{i}\,d_{i+1} = 2\rho_{i+1}.$$
So, $\mathbb{S}_{\eta_{i+1}}$ and $\mathbb{S}_{\theta_{i+1}}$ are
disjoint. Statement (c) is proved.

Finally, we prove statement (d). Let
$\xi=(\xi_1,\ldots,\xi_{i+1})\in V_{\C}(\IS_{i+1})$ and
$\eta_{j}=\xi_1+s_1\xi_2+\cdots+s_1\cdots
 s_{j-1}\,\xi_{j},j=1,\ldots,i+1$. By the induction hypothesis, we
have
$(\xi_1,\ldots\xi_{i})=(\eta_1,\frac{\eta_2-\eta_1}{s_1},\ldots,\frac{\eta_i-\eta_{i-1}}{s_1\cdots\,s_{i-1}})$
where $|\eta_{j+1}-\eta_j|<s_1\cdots s_{j-1}d_j, j=1,\ldots,i$. Note
that the inequality is equivalent to that $\eta_{j+1}\in
\mathbb{S}_{\eta_{j}}$.
By \bref{eq-ww2}, we can recover the $\xi_{i+1}$ with the following
equation
 $$ \xi_{i+1} = \frac{\eta_{i+1}-\eta_{i}}{s_1\cdots s_{i}}.$$
From Lemma \ref{lm-ww0}, we have
$\eta_{i+1}\in\mathbb{S}_{\eta_{i}}$ or equivalently
$|\eta_{i+1}-\eta_{i}| < s_1\cdots s_{i-1}d_{i}$. Then the root
$(\xi_1,\ldots\xi_{i+1})=(\eta_1,\frac{\eta_2-\eta_1}{s_1},\ldots,\frac{\eta_{i+1}-\eta_{i}}{s_1\cdots\,s_{i}})$
is a root of the LUR $(T_1(x),\ldots,T_{i+1}(x), s_j, d_j,
j=1,\dots,i)$. We thus proved that the roots of $\IS_{i+1}$ are the
same as the roots of the LUR and hence statement (d).
\end{proof}

We have the following corollaries.
\begin{cor}\label{cor-l1}
If \bref{eq-lur} is an LUR for a polynomial system $\PS$, then the
roots of $\IS_i=0$ are in a one to one correspondence with the roots
of $T_i(x)=0$ for $i=1,\ldots,n$.
\end{cor}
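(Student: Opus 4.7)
The plan is straightforward: for each $i = 1, \ldots, n$, define the map
\[
\Phi_i: V_\C(\IS_i) \to V_\C(T_i), \qquad (\xi_1, \ldots, \xi_i) \mapsto \xi_1 + s_1\xi_2 + \cdots + s_1 \cdots s_{i-1}\xi_i,
\]
and show that it is both injective and surjective by invoking Lemma~\ref{lm-ww1}. That $\Phi_i$ is well-defined follows from the construction of $T_i(x)$ as the generator of the elimination ideal $(\bar{\IS_i}) \cap \Q[x]$ in \bref{eq-tii}: for any $(\xi_1, \ldots, \xi_i)\in V_\C(\IS_i)$, the extended tuple $(\xi_1, \ldots, \xi_i, \xi_1 + s_1\xi_2 + \cdots + s_1 \cdots s_{i-1}\xi_i)$ annihilates every generator of $\bar{\IS_i}$ (the original generators vanish on $(\xi_1,\ldots,\xi_i)$, and the extra generator $x - x_1 - s_1 x_2 - \cdots - s_1\cdots s_{i-1} x_i$ vanishes by construction), hence $T_i$ vanishes at the last coordinate.

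Injectivity of $\Phi_i$ is then immediate from Lemma~\ref{lm-ww1}(a), which asserts precisely that $x_1 + s_1 x_2 + \cdots + s_1\cdots s_{i-1} x_i$ is a separating element of $\IS_i$, so distinct preimages yield distinct images. Surjectivity is the content of Lemma~\ref{lm-ww1}(b): every root $\eta_i$ of $T_i(x) = 0$ takes the form $\eta_{i-1} + s_1 \cdots s_{i-1}\xi_i$ where, by induction on $i$, $\eta_{i-1}$ itself is obtained from some $(\xi_1, \ldots, \xi_{i-1}) \in V_\C(\IS_{i-1})$ via the analogous linear combination, and $(\xi_1,\ldots,\xi_{i-1},\xi_i) \in V_\C(\IS_i)$. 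The base case $i = 1$ is trivial since $(\IS_1) = (T_1(x))$ and $\Phi_1$ is the identity. No real obstacle is involved: the corollary is essentially a repackaging of parts (a) and (b) of Lemma~\ref{lm-ww1} as a bijective correspondence, with the only bookkeeping being to unwind the inductive description of roots of $T_i$ back through $T_{i-1}, T_{i-2}, \ldots, T_1$ to obtain the explicit linear-combination form.
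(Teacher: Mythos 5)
Your proof is correct and follows the same approach as the paper: define the map $\Phi_i(\xi) = \xi_1 + s_1\xi_2 + \cdots + s_1\cdots s_{i-1}\xi_i$, get injectivity from Lemma~\ref{lm-ww1}(a), and argue surjectivity. The paper simply declares the map ``clearly surjective'' (which one can also see directly from the fact that $T_i$ generates the elimination ideal of the zero-dimensional $\bar{\IS_i}$), whereas you make this explicit by unwinding Lemma~\ref{lm-ww1}(b) inductively --- a fuller justification of the same step, not a different route.
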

\begin{proof}
Let $\xi=(\xi_1,\ldots,\xi_{i})\in V_{\C}(\IS_{i})$. Then
$\eta_{i}=\xi_1+s_1\xi_2+\cdots+s_1\cdots
 s_{i-1}\,\xi_{i}$ is a root of $T_i(x)=0$.
By (a) of Lemma \ref{lm-ww1}, this mapping is injective. This
mapping is clearly surjective.
\end{proof}

\begin{cor}
The real roots of $\PS=0$ are in  a one to one correspondence with
the real roots of $T_n(x)=0$. More precisely, if $\alpha_n$ is a
real root of $T_n(x)=0$, then in the corresponding root
$(\alpha_1,\frac{\alpha_2-\alpha_1}{s_1},\ldots,\frac{\alpha_n-\alpha_{n-1}}{s_1\cdots\,s_{n-1}})$
of $\PS=0$, $\alpha_i$ is a real root of $T_i(x)=0,i=1,\ldots,n-1$.
\end{cor}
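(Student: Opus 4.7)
The strategy is to combine the bijection from Corollary~\ref{cor-l1} with a complex-conjugation argument that exploits the fact that $\PS\subset\Q[x_1,\dots,x_n]$ and that every $s_i$ is rational.

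First, I would fix notation and recall the bijection. By Corollary~\ref{cor-l1}, the map $\Phi\colon V_\C(\IS_n)\to V_\C(T_n)$ sending $\xi=(\xi_1,\ldots,\xi_n)$ to $\eta_n=\xi_1+s_1\xi_2+\cdots+s_1\cdots s_{n-1}\xi_n$ is a bijection. The forward direction is easy: if $\xi$ is a real root of $\PS$, then $\eta_n=\Phi(\xi)$ is a rational combination of real numbers, hence real. So real roots of $\PS$ map to real roots of $T_n$.

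For the converse, I would use complex conjugation. Since $\PS$ has rational coefficients, $V_\C(\IS_n)$ is stable under coordinatewise conjugation $\xi\mapsto\overline{\xi}$. Because all $s_i\in\Q$, we also have $\Phi(\overline{\xi})=\overline{\Phi(\xi)}$. Now suppose $\alpha_n\in\R$ is a root of $T_n(x)=0$ and let $\xi=\Phi^{-1}(\alpha_n)$. Then $\Phi(\overline{\xi})=\overline{\alpha_n}=\alpha_n=\Phi(\xi)$, and injectivity of $\Phi$ forces $\overline{\xi}=\xi$, i.e.\ every coordinate of $\xi$ is real. This establishes the one-to-one correspondence between the real roots of $\PS=0$ and the real roots of $T_n(x)=0$.

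For the ``more precisely'' part, I would argue as follows. Given such a real root $\xi=(\xi_1,\dots,\xi_n)$, define $\eta_i=\xi_1+s_1\xi_2+\cdots+s_1\cdots s_{i-1}\xi_i$ for $i=1,\ldots,n$, which are real since the $\xi_j$ and $s_j$ are real. By the definition of $T_i$ in~\bref{eq-tii} applied to the truncation $(\xi_1,\ldots,\xi_i)\in V_\C(\IS_i)$, each $\eta_i$ is a root of $T_i(x)=0$, hence a real root. Finally, Lemma~\ref{lm-ww0} ensures $\eta_{i+1}\in\mathbb S_{\eta_i}$, so the $\eta_i$ satisfy the matching condition in~\bref{eq-rt1}; by statement~(d) of Lemma~\ref{lm-ww1} they are exactly the coordinates $\alpha_i$ that appear in the LUR representation of the root corresponding to $\alpha_n$. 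Thus each such $\alpha_i$ is a real root of $T_i(x)=0$, as claimed.

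The argument is essentially routine once one has Corollary~\ref{cor-l1} and Lemma~\ref{lm-ww1}; the only subtle point, and the one I would be most careful about, is making sure the conjugation argument is applied to the full tuple $\xi$ rather than just to $\alpha_n$, and that the identification of the ``matched'' $\alpha_i$ in the LUR with the partial sums $\eta_i$ is justified by Lemma~\ref{lm-ww0} rather than assumed.
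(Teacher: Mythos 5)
Your proof is correct, but it takes a genuinely different route from the paper. The paper's argument is geometric and stays entirely within the LUR machinery: since $T_{i-1}\in\Q[x]$, its non-real roots come in conjugate pairs $\eta,\overline\eta$, and by statement (c) of Lemma~\ref{lm-ww1} the open squares $\mathbb S_\eta$ and $\mathbb S_{\overline\eta}$ are disjoint. Because these two squares are mirror images across the real axis (the max-norm $|{\cdot}|$ from~\bref{eq-dis} is conjugation-invariant), a point on the real axis lying in one would also lie in the other, contradicting disjointness; so no $\mathbb S_\eta$ with $\eta\notin\R$ can meet the real axis. Hence a real root $\alpha_i$ of $T_i(x)=0$ can only lie in $\mathbb S_{\alpha_{i-1}}$ for a \emph{real} root $\alpha_{i-1}$ of $T_{i-1}(x)=0$, and one simply descends from $\alpha_n$ to $\alpha_1$. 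Your approach instead works algebraically on the point side: you observe that $V_\C(\IS_n)$ is conjugation-stable and that the separating map $\Phi$ commutes with conjugation (because the $s_i$ are rational), then use the injectivity of $\Phi$ from Corollary~\ref{cor-l1} to force $\overline\xi=\xi$, from which reality of all $\eta_i$ follows. Both arguments are sound; yours buys an explicit bijection statement directly from injectivity and avoids reasoning about squares entirely, while the paper's buys a self-contained geometric picture consistent with the rest of Section~2 and needs no reference to the underlying variety. One small point worth making explicit in your write-up: to identify the $\eta_i$ with the $\alpha_i$ of the corollary's statement, you should invoke not just (d) but also (b) and (c) of Lemma~\ref{lm-ww1} — (b) and (c) give uniqueness of the matching chain descending from a given $\alpha_n$, which is what lets you say ``exactly the coordinates.''
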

\begin{proof}
For each root $\eta$ of $T_{i-1}(x)=0$, let  $\mathbb{S}_{\eta}$ be
the open square neighborhood of $\eta$ defined in \bref{eq-nbh}.
We claim that a real root of $T_i(x)=0$ cannot be in
$\mathbb{S}_{\eta}$ for a complex root $\eta$ of $T_{i-1}(x)=0$.
Since $T_{i-1}(x)$ has rational numbers as coefficients, the complex
roots of $T_{i-1}(x)=0$ appear as pairs which are symmetric with the
real axis and the open square neighborhoods for a pair of complex
roots are disjoint. Then the open square neighborhood of any complex
root has no intersection with the real axis. This proves the claim.
As a consequence, if $\alpha_n$ is a real root of $T_n(x)=0$, then
$\alpha_n$ is in the open square neighborhood of a real root
$\alpha_{n-1}$ of $T_{n-1}(x)=0$. Repeating the process, we obtain a
real root
$(\alpha_1,\frac{\alpha_2-\alpha_1}{s_1},\ldots,\frac{\alpha_n-\alpha_{n-1}}{s_1\cdots\,s_{n-1}})$
for $\PS=0$ where all $\alpha_i$ are real numbers.
The other side is obvious: a real root of $\PS=0$ will correspond to
a real root of $T_n(x)=0$.
\end{proof}
From the lemma, we can consider the real roots of an LUR if we only
interest in the real roots of $\PS=0$.

\section{Algorithm for computing an LUR and roots isolation}
In this section, we will present an algorithm to compute an LUR for
a zero-dimensional polynomial system. The algorithm will isolate the
roots of the system in $\mathbb{C}^n$ at the same time.

\subsection{Complex isolation intervals and isolation boxes}
We introduce some basic concepts of interval computation. For more
details, we refer to \cite{intervalbook}.

Let $\intbox \Q$ denote the set of intervals of the form $[a,b]$,
where $a\le b\in \Q$. The {\bf length} of an interval $I=
[a,b]\in\intbox \Q$ is defined to be $|I| = b-a$.
%
%A set $\BS$ of disjoint intervals is called  {\bf real isolation
%intervals} for the real roots of $T(x)=0$ if each real root of $T(x)=0$ is in
%an interval in $\BS$ and each interval in $\BS$ contains one real root of
%$T(x)=0$. $\B=I_1\times\cdots\times I_n$ is called an $\epsilon$-real isolation box of $\mathcal{P}$ if $\B$ contains only one real root of $\mathcal{P}$ and $|I_i|\le \epsilon, I_i\in\intbox\Q$.
%
Assuming $a_1\le a_2$, we define the distance between two intervals
as
$$\Dis([a_1,b_1],[a_2,b_2])=
\begin{cases}
a_2-b_1, \,\hbox{ if } [a_1,b_1]\cap[a_2,b_2]=\emptyset,\\
0, \hspace{1cm}\hbox{        otherwise}.
\end{cases}
$$

A pair of intervals  $\langle I,J\rangle$ is called a {\bf complex
interval}, which represents a rectangle in the complex plane. A
complex number $\langle \alpha, \beta \rangle=\alpha+\beta
\mathfrak{i}$ ($\mathfrak{i}^2=-1$) is said to be in a complex
interval $\langle I, J \rangle$ if $\alpha\in I$ and  $\beta\in J$.
 The length of a complex interval $\langle I,J\rangle$ is defined to
be $|\langle I,J\rangle|=\max\{|I|, |J|\}$.
We define the distance between two complex intervals as {\small
\begin{equation}\label{eq-rtsep} \Dis(\langle [a_1,b_1],
[p_1,q_1]\rangle,\langle [a_2,b_2], [p_2,q_2]\rangle)=
\max\{\Dis([a_1,b_1],[a_2,b_2]),\Dis([p_1,q_1],[p_2,q_2]\}.
\end{equation}}

A set $\mathcal{S}$ of disjoint complex intervals  is called {\bf
isolation intervals} of $T(x)=0$ if each interval in $\mathcal{S}$
contains only one root of $T(x)=0$ and each root of $T(x)=0$ is
contained in one interval in $\mathcal{S}$.
Methods to isolate the complex roots of a univariate polynomial
equation are given in \cite{collins-c1,pinkert,sy,wilf}.

Let $\intbox \C$ denote the set of complex intervals. An element
$\langle I^{\R}_1, I^{\I}_1\rangle\times\cdots\times\langle
I^{\R}_n, I^{\I}_n\rangle$ in $\intbox \C^n$ is called a {\bf
complex box}.
A set $\mathcal{S}$ of {\bf isolation boxes} for a zero dimensional
polynomial system $\mathcal{P}$ in $\Q[x_1,\ldots,x_n]$ is a set of
disjoint complex boxes in  $\intbox \C^n$ such that each box in
$\mathcal{S}$ contains only one root of $\mathcal{P}=0$ and each
root of $\mathcal{P}=0$ is in one of the boxes.
Furthermore,  if each box  $\B=\langle I^{\R}_1,
I^{\I}_1\rangle\times\cdots\times\langle I^{\R}_n, I^{\I}_n\rangle$
in $\mathcal{S}$ satisfies $\max\limits_i\{|I^{\R}_i|,
|I^{\I}_i|\}\le\epsilon$, then $\mathcal{S}$ is called an {\bf
$\epsilon$-isolation boxes} of $\mathcal{P}=0$.
The aim of this paper is to compute a set of $\epsilon$-isolation
boxes for a zero-dimensional polynomial system $\mathcal{P}$.

\subsection{Gr\"obner basis and computation of $r_i$ and $T_i(x)$}
In this subsection, we will show how to use Gr\"obner basis to
compute $r_i$ defined in \bref{eq-dn2} and $T_i(x)$ defined in
\bref{eq-bi} supposing the parameters $s_i$ are given.

Let $\PS\subset\Q[x_1,\ldots,x_n]$ be a zero-dimensional polynomial
system. Then $\mathcal{A}=\mathbb{Q}[x_1,\ldots,x_n]/$ $(\PS)$ is a
finite dimensional linear space over $\Q$. Let $\GB$ be a
Gr$\ddot{o}$bner basis of $\PS$ with any ordering. Then the set of
remainder monomials
$$\B=\{x_1^{t_1}\cdots x_n^{t_n}|x_1^{t_1}\cdots x_n^{t_n} \hbox{ is not divisible by the leading term of any element of } \GB\}$$
forms a basis of $\mathcal{A}$ as a linear space over $\Q$, where
$t_i$ are non-negative integers.

Let $f\in\Q[x_1,\ldots,x_n]$. Then $f$ gives a multiplication map
$$M_f: \mathcal{A}\longrightarrow \mathcal{A}$$
defined by $M_f(p)=f p$ for $p\in \mathcal{A}$.
It is clear that $M_f$ is a linear map. We can construct the matrix
representation for $M_f$ from  $\B$ and $\GB$. The following theorem
is a basic property for $M_f$ \cite{lazard1}.

\begin{thm}[Stickelberger's Theorem]\label{thm-laz}
Assume that $\PS\subset\Q[x_1,\ldots,x_n]$ has a finite positive
number of solutions over $\C$. The eigenvalues of $M_f$ are the
values of $f$ at the roots of $\PS=0$ over $\C$ with respect to
multiplicities of the roots of $\PS=0$.
\end{thm}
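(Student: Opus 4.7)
The plan is to extend scalars and then use the structure theorem for finite-dimensional commutative algebras. First I would pass from $\Q$ to $\C$: the matrix of $M_f$ in the basis $\B$ has entries in $\Q \subset \C$, so its eigenvalues are unchanged when we regard it as acting on $\mathcal{A}_{\C} = \C[x_1,\ldots,x_n]/(\PS)$. Because $V_{\C}(\PS)$ is finite the extended ideal $(\PS)$ in $\C[x_1,\ldots,x_n]$ is still zero-dimensional, so $\mathcal{A}_{\C}$ is a finite-dimensional $\C$-algebra.

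Next, by the Chinese Remainder Theorem for Artinian rings (equivalently, primary decomposition of a zero-dimensional ideal over an algebraically closed field), there is a canonical isomorphism of $\C$-algebras
\[
\mathcal{A}_{\C} \;\cong\; \prod_{\xi \in V_{\C}(\PS)} \mathcal{A}_{\xi},
\]
where $\mathcal{A}_{\xi}$ denotes the localization of $\mathcal{A}_{\C}$ at the maximal ideal $\mathfrak{m}_{\xi} = (x_1 - \xi_1, \ldots, x_n - \xi_n)$. Each $\mathcal{A}_{\xi}$ is a local Artinian $\C$-algebra whose $\C$-dimension $m_{\xi}$ is, by definition, the multiplicity of the root $\xi$. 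The multiplication operator $M_f$ respects this product decomposition, so it splits as $M_f = \bigoplus_{\xi} M_{f,\xi}$, and hence its characteristic polynomial factors as $\prod_{\xi} \chi_{M_{f,\xi}}(x)$.

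Finally, on each local factor the element $f - f(\xi)$ maps to $0$ modulo $\mathfrak{m}_{\xi}$ and therefore lies in $\mathfrak{m}_{\xi}$; since $\mathcal{A}_{\xi}$ is local Artinian its maximal ideal is nilpotent, so $f - f(\xi)$ is nilpotent in $\mathcal{A}_{\xi}$, and the operator $M_{f,\xi} - f(\xi)\,I$ is nilpotent on an $m_{\xi}$-dimensional space. Consequently $M_{f,\xi}$ has the single eigenvalue $f(\xi)$ with algebraic multiplicity $m_{\xi}$, and reassembling the local factors yields that the multiset of eigenvalues of $M_f$ is precisely $\{f(\xi) : \xi \in V_{\C}(\PS)\}$ counted with root multiplicities.

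The main obstacle is the invocation of the primary decomposition step: one needs to know that a zero-dimensional ideal in $\C[x_1,\ldots,x_n]$ decomposes as a finite product of local Artinian pieces indexed by the variety, and that the local dimension coincides with the intersection-theoretic multiplicity appearing in the statement. Both facts are standard but must be cited cleanly; once they are in hand, the remainder is essentially the linear-algebra remark that in any basis compatible with the decomposition, $M_{f,\xi}$ is upper-triangular with $f(\xi)$ on the diagonal.
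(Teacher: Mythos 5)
The paper does not prove this statement; it is quoted as a known fact and attributed to Lazard (1981), so there is no in-paper argument to compare against. Your proposal is a correct proof by the standard route: extend scalars to $\C$, split the resulting finite-dimensional $\C$-algebra into local Artinian factors $\mathcal{A}_\xi$ over the points $\xi \in V_{\C}(\PS)$ via the Chinese Remainder Theorem, note that $M_f$ acts blockwise, and on each block observe that $f - f(\xi)$ lies in the nilpotent maximal ideal, so that $M_{f,\xi}$ is $f(\xi)$ times the identity plus a nilpotent operator, whence its characteristic polynomial is $(x - f(\xi))^{m_\xi}$ with $m_\xi = \dim_{\C}\mathcal{A}_\xi$. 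Each ingredient---the flat base change identifying $\mathcal{A}\otimes_{\Q}\C$ with $\C[x_1,\ldots,x_n]$ modulo the extended ideal, CRT for Artinian rings, nilpotence of the maximal ideal in a local Artinian algebra---is standard and correctly deployed, and you rightly note that one must take ``multiplicity'' to mean $\dim_{\C}\mathcal{A}_\xi$, which is the convention the paper tacitly uses. This is precisely the argument in standard references such as the Cox chapter already cited in the bibliography.
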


Let $s_i$ be rational numbers satisfying \bref{eq-dn3} and
$${\mathcal{F}_i} = \PS\cup\{x - x_1 - s_1 x_2 - \cdots - s_1\cdots
s_{i-1} x_i\}.$$

We can compute $g_i(x_i)$ and $T_i(x)$ such that
\begin{equation}\label{eq-ti}
(g_i(x_i))=\Q[x_i]\cap(\PS) \hbox{ and }
(T_i(x))=\Q[x]\cap(\mathcal{F}_i).
\end{equation}
In fact, we can construct the matrixes for $M_{x_i}$ and $M_x$ based
on $\B$ and $\GB$, and $g_i(x_i)$ and $T_i(x)$ are the minimal
polynomials for $M_{x_i}$ and $M_x$, respectively  (See reference
\cite{cox}).
Note that we can also use the method introduced in reference
\cite{fglm} to compute $g_i(x_i), T_i(x)$.

From Theorem \ref{thm-laz} and (a) of Lemma \ref{lm-ww1},  the
$i$-th coordinates of all the roots of $\PS=0$ are roots of
$g_i(x_i)=0$, and all the possible values of $x=\sum_{j=1}^i
s_1\cdots s_{j-1} x_j$ on the roots of $\PS=0$ are roots of
$T_i(x)=0$.

Now we show how to estimate $r_{i}$ defined in \bref{eq-dn2}. At
first, compute $(g_{i}(x_{i}))=(\PS)\cap\Q[x_{i}]$. Then we have the
following result.

\begin{lem}\label{lm-rb1}
Use the notations introduced before. Then
\begin{equation}\label{eq-rb2}
r_{i} = 2\max \{\RB(g_{i}(x_{i}))
%P_{i+1,\mathbb{B}}^d(x_{i+1}),\RB(P_{i+1,\mathbb{B}}^u(x_{i+1}))
\}
\end{equation}
satisfies the condition \bref{eq-dn2}, where $\RB(g)$ is the root
bound of a univariate polynomial equation $g=0$.
\end{lem}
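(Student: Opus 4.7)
The plan is to verify that $r_i$ defined by \bref{eq-rb2} is a strict upper bound for twice the modulus of the $i$-th coordinate of every complex root of $\IS_i$, as demanded by \bref{eq-dn2}. The core idea is that $g_i(x_i)$, being a generator of the principal elimination ideal $(\PS)\cap\Q[x_i]$, lies inside $\IS_i$, so it vanishes on the $i$-th coordinates of all points of $V_\C(\IS_i)$. The root bound $\RB(g_i)$ then provides the needed estimate.

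First I would establish the containment $g_i(x_i)\in\IS_i$. Since $g_i\in\Q[x_i]\subset\Q[x_1,\ldots,x_i]$ and $g_i\in(\PS)$ by \bref{eq-ti}, we have
$$g_i(x_i)\in (\PS)\cap\Q[x_1,\ldots,x_i]=\IS_i.$$
Next, I would exploit this membership: for any root $(\alpha_1,\ldots,\alpha_i)\in V_\C(\IS_i)$, every polynomial in $\IS_i$ vanishes on it, so in particular $g_i(\alpha_i)=0$. Hence $\alpha_i$ is a root of the univariate polynomial $g_i(x_i)=0$.

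By the definition of the root bound, $|\alpha_i|<\RB(g_i(x_i))$ for every root $\alpha_i$ of $g_i$. Combining this with the previous step gives
$$2\,|\alpha_i|<2\,\RB(g_i(x_i))=r_i$$
for every $(\alpha_1,\ldots,\alpha_i)\in V_\C(\IS_i)$, which is precisely condition \bref{eq-dn2}.

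I do not foresee a real obstacle here; the lemma is essentially a packaging of the elimination-ideal characterization of $g_i$ together with a univariate root bound. The only subtle point is making sure the definition of $\RB$ being used is a \emph{strict} upper bound on the moduli of the complex roots (so that the strict inequality in \bref{eq-dn2} is preserved); if $\RB$ is merely a non-strict bound, one replaces $r_i$ by any rational number strictly larger than $2\,\RB(g_i)$, which is harmless for the subsequent construction.
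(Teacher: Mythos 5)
Your proof is correct and matches the paper's (one-line) argument: since $g_i(x_i)\in(\PS)\cap\Q[x_i]\subset\IS_i$, the $i$-th coordinate of every point of $V_\C(\IS_i)$ is a root of $g_i$, and the root bound of $g_i$ then bounds $|\alpha_i|$. You also flag, correctly, that \bref{eq-dn2} requires a strict inequality, so $\RB$ must be taken as a strict upper bound on the moduli of the roots (or $r_i$ chosen strictly larger than $2\,\RB(g_i)$), a point the paper's terse proof does not spell out.
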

\begin{pf}
The lemma is obvious since for any root $(\xi_1,\ldots,\xi_{i})\in
V_{\C}(\IS_{i})$, $\xi_{i}$ is a root of $g_{i}(x_{i})=0$.
\end{pf}

\subsection{Theoretical preparations for the algorithm}

In this subsection, we will outline an algorithm to compute an LUR
for $\PS$ and to isolate the roots of $\PS=0$ under a given
precision $\epsilon$. The algorithm is based on an interval version
of Theorem \ref{th-1}.

We define the {\bf isolation boxes} for an LUR defined in
\bref{eq-lur} as: {\small \begin{eqnarray}\label{eq-ibox} && \{
B_{1}\times\frac{B_{2}-B_{1}}{s_1}\times \cdots\times
\frac{B_{n}-B_{n-1}}{s_1\cdots s_{n-1}}\,|\,
 B_i\in \mathcal{B}_i, \Dis(B_{i+1},B_{i}) < \rho_i/2,1\le i\le
n-1\}
\end{eqnarray}}
where $\mathcal{B}_i$ is a set of isolation boxes for the complex
roots of $T_i(x)=0$ and $\rho_i = s_1\cdots s_{i-1}d_i$.
In  Theorem \ref{thm-eps1} to be proved below, we will give criteria
under which the
isolation boxes for $\PS$ are the isolation boxes of an LUR.\\

Let $\PS\subset\Q[x_1,\ldots,x_n]$ be a zero-dimensional polynomial
system.
%We first compute a Gr$\ddot{o}$bner basis $\GB$ of $\PS$
%under any ordering and a monomial basis $\B$ for
%$\mathcal{A}=\Q[x_1,\ldots,x_n]/(\PS)$.
We will compute an LUR for $\mathcal{P}$ and a set of
$\epsilon$-isolation boxes for the roots of $\mathcal{P}=0$
inductively.

At first, consider $i=1$. We compute $T_1(x)$ as defined in equation
\bref{eq-ti}. Let $\mathcal{B}_1$ be a set of isolation intervals
for the complex roots of $T_1(x)=0$. Then, we can set $d_1$ to be
the minimal distance between any two intervals in $\mathcal{B}_1$.

For $i$ from $1$ to $n-1$, assuming that we have computed

\quad$\bullet$ An LUR $(T_1(x),\ldots,T_i(x), s_j,d_j,
j=1,\ldots,i-1)$ for $\IS_i$.

\quad$\bullet$ A set of $\epsilon$-isolation boxes for $\IS_i$.

\quad$\bullet$ The parameter $d_i$.

We will show how to compute  $r_{i+1}$, $s_i$, $T_{i+1}(x)$,
$d_{i+1}$, and a set of $\epsilon$-isolation boxes of the roots of
$\IS_{i+1}=0$. The procedure consists of three steps.

%$$(T_1(x),\ldots,T_i(x),s_j,d_j, j=1,\ldots,i-1)$$ for $\IS_i$,
%$\epsilon$-isolation boxes for $\IS_i=0$, and $d_i$, we will show
%how to compute an LUR for $\IS_{i+1}$, $\epsilon$-isolation boxes of
%the roots of $\IS_{i+1}=0$, and $d_{i+1}$. The procedure consists of
%three steps.

{\bf Step 1.} We will compute $r_{i+1},s_{i}$ as introduced in
\bref{eq-dn2} and \bref{eq-dn3}. With $s_{i}$, we can compute
$T_{i+1}(x)$ as defined in \bref{eq-ti}.

Here $r_{i+1}$ can be computed with the method in Lemma
\ref{lm-rb1}. Note that $d_i$ is known from the induction
hypotheses. Then we can choose a rational number $s_i$ such that
condition \bref{eq-dn3} is valid. Finally, $T_{i+1}(x)$ can be
computed with the methods mentioned below equation \bref{eq-ti}.

{\bf Step 2.} We are going to compute the isolation intervals of the
roots of $\IS_{i+1}=0$. Let  $\xi=(\xi_1,\ldots,\xi_i)$ be a root of
$\IS_i=0$. We are going to find the roots of $\IS_{i+1}=0$ ``lifted"
from $\xi$, that is, roots of the form
\begin{eqnarray}\label{eq-n10}\zeta_j&=&(\xi_1,\ldots,\xi_i,\xi_{i+1,j}),j=1,\ldots,m.\end{eqnarray}
To do that, we need only to find a set of isolation intervals for
$\xi_{i+1,j}$ with lengths no larger than $\epsilon$, since we
already have an $\epsilon$-box for $\xi$.

Let
\begin{eqnarray}\label{eq-n11}
   \eta_i&=&\xi_1+s_1\xi_2+\cdots+s_1\cdots s_{i-1}\xi_{i}.\nonumber
\end{eqnarray}
Then, $\eta_i$ is a root of $T_{i}(x)=0$.
By (b) of Lemma \ref{lm-ww1} the roots $\theta_j$ of $T_{i+1}(x)=0$
correspond to $\zeta_j$ are
\begin{eqnarray}\label{eq-n12}
 \theta_j&=&\eta_i + s_1\cdots s_{i}\xi_{i+1,j},j=1,\ldots,m.
\end{eqnarray}
We have

\begin{lem}\label{lm-rr1}
Let $I_i=\langle[a,b],[c,d]\rangle$ be an isolation interval for the
root $\eta_i$ of $T_i(x)=0$ such that $|I_i|< \frac{1}{4}\rho_i$
where $\rho_i=s_1\cdots s_{i-1}d_{i}$. Then all $\theta_j$ in
\bref{eq-n12} are in the following complex interval
\begin{eqnarray}\label{eq-nbh1}
\mathbb{I}_{I_i} &=&\langle(a-\rho_i/2,b+\rho_i/2,
(c-\rho_i/2,d+\rho_i/2)\rangle.
\end{eqnarray}
Furthermore, the intervals $\mathbb{I}_{\eta}$ are disjoint for all
roots $\eta$ of $T_i(x)=0$.
\end{lem}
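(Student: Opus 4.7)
The plan is to derive both assertions from machinery already proved: the displacement estimate \bref{eq-ww1} in Lemma~\ref{lm-ww0} (which bounds $|\eta_{i+1}-\eta_i|$ by $\tfrac12\rho_i$) and the disjointness of the squares $\mathbb{S}_\eta$ from part~(c) of Lemma~\ref{lm-ww1}. The role of the hypothesis $|I_i| < \tfrac14\rho_i$ is to guarantee that inflating $I_i$ by $\rho_i/2$ on each side still keeps us strictly inside $\mathbb{S}_{\eta_i}$, which is the bridge that links the two lemmas to the present statement.

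First I would handle the containment. By \bref{eq-ww1} (applied to $\theta_j$ playing the role of $\eta_{i+1}$), we have $|\theta_j - \eta_i| < \rho_i/2$, and because of the max-norm convention \bref{eq-dis} this single inequality gives simultaneous bounds $|\Re(\theta_j)-\Re(\eta_i)|<\rho_i/2$ and $|\Im(\theta_j)-\Im(\eta_i)|<\rho_i/2$. Since $\eta_i \in I_i$ means $\Re(\eta_i)\in[a,b]$ and $\Im(\eta_i)\in[c,d]$, the real part of $\theta_j$ lies in $(a-\rho_i/2,\,b+\rho_i/2)$ and its imaginary part in $(c-\rho_i/2,\,d+\rho_i/2)$. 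This is exactly $\theta_j\in\mathbb{I}_{I_i}$.

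For disjointness I would show the stronger inclusion $\mathbb{I}_{I_i}\subset\mathbb{S}_{\eta_i}$, from which the conclusion is immediate via Lemma~\ref{lm-ww1}(c). To verify the inclusion, note that for any point $z\in\mathbb{I}_{I_i}$ one has $|\Re(z)-\Re(\eta_i)| \le (b-a)+\rho_i/2 \le |I_i|+\rho_i/2 < \rho_i/4+\rho_i/2 < \rho_i$, and the same bound for the imaginary coordinate. So $z\in\mathbb{S}_{\eta_i}$. Therefore if $\eta$ and $\eta'$ are distinct roots of $T_i(x)=0$ with isolation boxes $I$ and $I'$, the inflated boxes $\mathbb{I}_{I}\subset\mathbb{S}_\eta$ and $\mathbb{I}_{I'}\subset\mathbb{S}_{\eta'}$ are disjoint because the squares $\mathbb{S}_\eta$ and $\mathbb{S}_{\eta'}$ are.

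There is no real obstacle here; the only thing to watch is the tightness of the constants, and this is precisely where the hypothesis $|I_i|<\rho_i/4$ enters. A smaller inflation radius than $\rho_i/2$ would fail to capture all lifted roots $\theta_j$ (because \bref{eq-ww1} only gives $\rho_i/2$), while a looser bound on $|I_i|$ would fail to keep $\mathbb{I}_{I_i}$ inside $\mathbb{S}_{\eta_i}$; the two constants are balanced so that $|I_i|+\rho_i/2<\rho_i$ with room to spare, making both assertions fall out simultaneously.
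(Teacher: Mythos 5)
Your proof is correct and follows the same route as the paper's: both use the displacement bound~\bref{eq-ww1} (namely $|\theta_j-\eta_i|<\rho_i/2$) for the containment, establish the inclusion $\mathbb{I}_{I_i}\subset\mathbb{S}_{\eta_i}$ from the $3\rho_i/4$ estimate, and then invoke part~(c) of Lemma~\ref{lm-ww1} for disjointness. The only stylistic difference is that you work directly with coordinates under the max-norm while the paper argues from Figure~\ref{fig-box1}; the constants and logical structure are identical.
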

\begin{pf}
In Figure \ref{fig-box1}, let square $ABCD$ be $\mathbb{S}_{\eta_i}
= \{\theta\in \C\, |\, |\theta-\eta_i|< \rho_i \}$ and  square
$A_1B_1C_1D_1 = \{\theta\in\C\,|\, |\theta-\eta_i| < \rho_i/2\}$.
By equation \bref{eq-ww1}, we know $|\theta_{j}-\eta_{i}|<
\frac{1}{2}\rho_i.$ So, $\theta_j$ is inside $A_1B_1C_1D_1$.
Let rectangle $A_2B_2C_2D_2$ be the complex interval $I_i$ and
rectangle $A_3B_3C_3D_3$ the complex interval $\mathbb{I}_{I_i}$
which is obtained by adding $\rho_i/2$ in each direction of the
rectangle $A_3B_3C_3D_3$.
Then, $\mathbb{I}_{I_i}$ contains $A_1B_1C_1D_1$ and hence
$\theta_{j}$ is inside $\mathbb{I}_{I_i}$.

For any $\theta\in\mathbb{I}_{I_i}$, we have $|\theta-\eta_i| \le
|\theta-P|$ where $P$ is one of the points $A_2,B_2,C_2,D_2$ to make
$|\theta-P|$ maximal. It is clear that $|\theta-P|\le \rho_i/2 +
|I_i| = \frac{3}{4}\rho_i$. So, $\forall\theta\in\mathbb{I}_{I_i}$,
$|\theta-\eta_i| \le \frac{3}{4}\rho_i$. Since $\mathbb{S}_{\eta_i}$
is the set of complex numbers satisfying $|\theta-\eta_i|< \rho_i$,
we have $\mathbb{I}_{I_i}\subset\mathbb{S}_{\eta_i}$.
By (c) of Lemma \ref{lm-ww1}, $\mathbb{S}_{\eta_i}$ are disjoint for
all roots of $T_i(x)=0$. Then  $\mathbb{I}_{I_i}$ are disjoint for
all roots of $T_i(x)=0$ too.
\end{pf}

\begin{figure}[ht]
\centering
\begin{minipage}{0.7\textwidth}
\centering
 \includegraphics[scale=0.40]{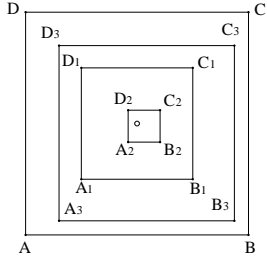}
 \caption{The isolation intervals $I_i$, $\mathbb{S}_{\eta_i}$, $\mathbb{I}_{I_i}$ for a root $\eta_i$ of $T_i(x)=0$.
 $\eta_i$ is represented by $\circ$.}
\label{fig-box1}
\end{minipage}
\end{figure}

The following lemma shows what is the precision needed to isolate
the roots of $T_{i+1}(x)=0$ in order for the isolation boxes to be
contained in some $\mathbb{I}_{I_i}$. It can be seen as an effective
version of the fact $\eta_{i+1}\in \mathbb{S}_{\eta_i}$ proved in
Lemma \ref{lm-ww0}.
\begin{lem}\label{lm-rr2}
Use the notations introduced in Lemma \ref{lm-rr1}. Let $\{ B_j,
j=1,\ldots,m \}$ be a set of $\frac{1}{4}\rho_i$-isolation boxes for
the roots $\theta_j, j=1,\ldots,m$ of $T_{i+1}(x)=0$. Then, for all
$j$
\begin{eqnarray}\label{eq-rr2}
B_j\subset\mathbb{I}_{I_i} \hbox{ and } \Dis(B_j,I_i) < \rho_i/2.
\end{eqnarray}
\end{lem}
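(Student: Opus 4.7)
The plan is to prove both conclusions by triangle inequality in the max-norm $|\cdot|$ defined in \bref{eq-dis}, combining three ingredients. First, by equation \bref{eq-ww1} in the proof of Lemma \ref{lm-ww0}, each lifted root obeys $|\theta_j - \eta_i| < \rho_i/2$. Second, since $B_j$ is a $\frac{1}{4}\rho_i$-isolation box containing $\theta_j$, every $p \in B_j$ satisfies $|p - \theta_j| \le \frac{1}{4}\rho_i$. Third, $\eta_i \in I_i$ together with $|I_i| < \frac{1}{4}\rho_i$ locates $\eta_i$ well inside the given isolation interval.

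For the distance bound $\Dis(B_j, I_i) < \rho_i/2$, I would take $\theta_j \in B_j$ and $\eta_i \in I_i$ as witnesses. By the definition \bref{eq-rtsep}, $\Dis(B_j, I_i)$ is the maximum of the gaps between the real-part projections and the imaginary-part projections of the two complex intervals. Each such gap is at most the absolute difference of the corresponding coordinate of $\theta_j$ and $\eta_i$, and both coordinate differences are strictly less than $\rho_i/2$ by the first ingredient, so this conclusion drops out immediately.

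For the inclusion $B_j \subset \mathbb{I}_{I_i}$, I would mirror the geometric argument used in the proof of Lemma \ref{lm-rr1}, working projection by projection on real and imaginary parts. Any $p \in B_j$ has $\Re(p)$ within $\frac{1}{4}\rho_i$ of $\Re(\theta_j)$, which is within $\rho_i/2$ of $\Re(\eta_i) \in [a,b]$; the analogous statement holds for $\Im(p)$. The main obstacle, and the only place where anything delicate happens, is a careful arithmetic check that, after correctly spending the slack provided by $|I_i| < \frac{1}{4}\rho_i$, these accumulated bounds keep $\Re(p)$ inside $[a-\rho_i/2, b+\rho_i/2]$ and $\Im(p)$ inside $[c-\rho_i/2, d+\rho_i/2]$. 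Equivalently, one shows that the Minkowski $\frac{1}{4}\rho_i$-inflation of the half-size open square $\{\theta : |\theta-\eta_i|<\rho_i/2\}$ (the square $A_1B_1C_1D_1$ of Figure \ref{fig-box1}) is contained in $\mathbb{I}_{I_i}$ whenever $\eta_i \in I_i$ and $|I_i| < \frac{1}{4}\rho_i$; this is the effective version of the containment $A_1B_1C_1D_1 \subset \mathbb{I}_{I_i}$ already established in Lemma \ref{lm-rr1}.
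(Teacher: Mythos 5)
Your handling of the second conclusion, $\Dis(B_j,I_i)<\rho_i/2$, is correct and coincides with the paper's: pick the witnesses $\theta_j\in B_j$ and $\eta_i\in I_i$, note that the gap distance between two complex intervals is bounded by the distance between any pair of points they contain, and invoke $|\theta_j-\eta_i|<\tfrac12\rho_i$ from \bref{eq-ww1}.

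The first conclusion is where you have a genuine gap, and it is precisely at the step you flag as a ``careful arithmetic check'' and then wave past. Your reformulation asserts that the Minkowski $\tfrac14\rho_i$-inflation of the half-size square $\{\theta:\ |\theta-\eta_i|<\tfrac12\rho_i\}$ is contained in $\mathbb{I}_{I_i}$. This is false. The inflated square has right edge at $\Re(\eta_i)+\tfrac34\rho_i$, while $\mathbb{I}_{I_i}$ has right edge at $b+\tfrac12\rho_i$; containment on that side forces $\Re(\eta_i)\le b-\tfrac14\rho_i$, but since $\Re(\eta_i)\ge a$ and $b-a<\tfrac14\rho_i$ we have $\Re(\eta_i)>b-\tfrac14\rho_i$, a contradiction. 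So the accumulated bound $\Re(p)<b+\tfrac34\rho_i$ does not tighten to $\Re(p)<b+\tfrac12\rho_i$; the slack $|I_i|<\tfrac14\rho_i$ only controls the length $b-a$, not how close $\Re(\eta_i)$ sits to the right endpoint $b$, and similarly on the other three sides.

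What the paper's own argument actually establishes is containment in a larger set: it shows $B_j\subset\mathbb{S}_{\eta_i}$ (the square $ABCD$, of radius $\rho_i$ about $\eta_i$), not $B_j\subset\mathbb{I}_{I_i}$. There the margin between $\mathbb{I}_{I_i}$ and $\mathbb{S}_{\eta_i}$ is at least $\rho_i-\tfrac34\rho_i=\tfrac14\rho_i$, exactly enough to absorb a $\tfrac14\rho_i$-sized box anchored at a point of $\mathbb{I}_{I_i}$, and the squares $\mathbb{S}_{\eta_i}$ are pairwise disjoint by Lemma~\ref{lm-ww1}(c), which is what the matching in the algorithm needs. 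In other words, the paper's proof quietly replaces the stated target $\mathbb{I}_{I_i}$ with $\mathbb{S}_{\eta_i}$ — and that replacement is necessary, since the literal inclusion $B_j\subset\mathbb{I}_{I_i}$ fails in general. If you want to close the gap, prove $B_j\subset\mathbb{S}_{\eta_i}$ instead of $B_j\subset\mathbb{I}_{I_i}$.
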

\begin{pf}
From the proof of Lemma \ref{lm-rr1},  the distance from $\eta_i$ to
line $BC$ in Figure \ref{fig-box1} is $\rho_i$ and the distance from
$\eta_i$ to line $B_3C_3$ is less than $\frac{3}{4}\rho_i$. So, the
distance between line $BC$ and $B_3C_3$ is at least
$\frac{1}{4}\rho_i$. This fact is also valid for the pairs of lines
$AD/A_3D_3$, $AB/A_3B_3$, and $CD/C_3D_3$. Since the isolation boxes
$B_j$ are of size smaller than $\rho_i/4$ and their centers are
inside $A_3B_3C_3D_3$, the boxes $B_j$ must be inside $ABCD$.
Note that $I_i$ is rectangle $A_2B_2C_2D_2$. Since $\theta_j$ is
inside both $B_j$ and rectangle $A_3B_3C_3D_3$ and the distance from
$\eta_i$ to each edge of $A_3B_3C_3D_3$ is $\rho_i/2$, the distance
between  $B_j$ and $I_i$ must be smaller than $\rho_i/2$.
\end{pf}

Isolate the roots of $T_{i+1}(x)=0$ with precision
$\frac{1}{4}\rho_i$. By Lemma \ref{lm-rr2}, all the roots are in one
of the intervals $\mathbb{I}_{I}$ where $I$ is an isolation interval
for a root $\eta$ of $T_i(x)=0$.

Let $K_j=\langle[p_j,q_j],[g_j,h_j]\rangle(1\le j\le m)$ be the
isolation intervals for the roots $\theta_j$ of $T_{i+1}(x)=0$
inside $\mathbb{I}_{I_i}$.
Then from \bref{eq-n12}, the isolation intervals of
$\xi_{i+1,j}(1\le j\le m)$ are
\begin{eqnarray}\label{eq-inti}J_{i+1,j}=\frac{K_j - I_i}{s_1\cdots s_i}=\frac{\langle[p_j-b,q_j-a],[g_j-d,h_j-c]\rangle}{s_1\cdots s_i}. \end{eqnarray}

We have
\begin{lem}With the notations introduced above, if the following
conditions
\begin{eqnarray}\label{eq-p1} (q_j-p_j)+(b-a)<s_1\cdots s_i\epsilon,\,\,\, (h_j-g_j)+(d-c)<s_1\cdots s_i\epsilon\end{eqnarray}
\begin{equation}\label{eq-cd}
S_{\eta_i}=\min\limits_{1\le k\neq j\le
m}\Dis(\langle[p_k,q_k],[g_k,h_k]\rangle,\langle[p_j,q_j],[g_j,h_j]\rangle)>\max\{b-a,d-c\}.
\end{equation}
are valid, then $J_{i+1,j}$ defined in \bref{eq-inti} are
$\epsilon$-isolation intervals of $\xi_{i+1,j}$ in equation \bref
{eq-n10}.
%, which are the $(i+1)$-th coordinates of the roots of $\IS_{i+1}=0$ lifted from $\xi$.
\end{lem}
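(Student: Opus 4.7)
The plan is to verify the three conditions defining an $\epsilon$-isolation: (i) each $\xi_{i+1,j}$ lies in $J_{i+1,j}$; (ii) each $J_{i+1,j}$ has length less than $\epsilon$; (iii) the intervals $J_{i+1,j}$ are pairwise disjoint as $j$ varies. Conditions (i) and (ii) will follow directly from the explicit form of $J_{i+1,j}$ in \bref{eq-inti}, while (iii) is where the distance hypothesis \bref{eq-cd} has to be used carefully.

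For (i) and (ii), I would start from the identity \bref{eq-n12}, rewritten as $\xi_{i+1,j}=(\theta_j-\eta_i)/(s_1\cdots s_i)$. Since $\eta_i\in I_i=\langle[a,b],[c,d]\rangle$ and $\theta_j\in K_j=\langle[p_j,q_j],[g_j,h_j]\rangle$, interval subtraction gives $\theta_j-\eta_i\in K_j-I_i=\langle[p_j-b,q_j-a],[g_j-d,h_j-c]\rangle$, and dividing by the positive rational $s_1\cdots s_i$ places $\xi_{i+1,j}$ in $J_{i+1,j}$. The real width of $J_{i+1,j}$ equals $((q_j-p_j)+(b-a))/(s_1\cdots s_i)$, which by the first half of \bref{eq-p1} is strictly less than $\epsilon$; the identical computation on the imaginary part using the second half of \bref{eq-p1} then yields $|J_{i+1,j}|<\epsilon$.

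For (iii), note that scaling by the positive constant $1/(s_1\cdots s_i)$ preserves disjointness, so it is equivalent to show that $K_j-I_i$ and $K_k-I_i$ are disjoint in $\intbox \C$ for $k\neq j$. The key observation is that subtracting the common interval $I_i$ from two disjoint intervals widens each by the length of $I_i$ and therefore shrinks their gap by that same amount: a short calculation shows that $[p_j-b,q_j-a]$ and $[p_k-b,q_k-a]$ are disjoint iff $\Dis([p_j,q_j],[p_k,q_k])>b-a$, and analogously the shifted imaginary intervals are disjoint iff $\Dis([g_j,h_j],[g_k,h_k])>d-c$. Two complex intervals are disjoint whenever either their real or their imaginary projections are disjoint, so hypothesis \bref{eq-cd}, namely $\max\{\Dis_{\Re},\Dis_{\Im}\}>\max\{b-a,d-c\}$, forces at least one of the two inequalities above to hold, giving the desired disjointness.

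The only subtle point—and the place where one must be careful—is matching the $\max$/$\max$ form of \bref{eq-cd} against the two separate threshold conditions ``$\Dis_{\Re}>b-a$'' and ``$\Dis_{\Im}>d-c$''. The argument works because if $\Dis_{\Re}\ge\Dis_{\Im}$ then $\Dis_{\Re}>\max\{b-a,d-c\}\ge b-a$, and in the opposite case $\Dis_{\Im}>\max\{b-a,d-c\}\ge d-c$; in either case one projection separates, which is all the definition of disjointness for complex intervals requires. Everything else reduces to routine interval arithmetic and the linear relation between $\xi_{i+1,j}$, $\theta_j$, and $\eta_i$.
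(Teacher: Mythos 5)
Your proposal is correct and follows essentially the same route as the paper: reduce the disjointness of the $J_{i+1,j}$ to the disjointness of $K_j-I_i$ and $K_k-I_i$ by the observation that the common shift/scale preserves separation up to widening each interval by $|I_i|$, then use \bref{eq-cd} to guarantee that either the real or the imaginary projections remain separated. You add a few useful details the paper glosses over — the explicit containment $\xi_{i+1,j}\in J_{i+1,j}$ and the explicit case split $\Dis_{\Re}\geq\Dis_{\Im}$ vs.\ $\Dis_{\Im}>\Dis_{\Re}$ showing why $\max\{\Dis_{\Re},\Dis_{\Im}\}>\max\{b-a,d-c\}$ forces one of the threshold inequalities — but the structure and the key lemma (disjointness of one projection suffices) are the same.
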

\begin{pf}
%By Lemma \ref{lm-rr1}, the roots $\theta_j$ of $T_{i+1}(x)=0$
%defined in \bref{eq-n12} are inside $\mathbb{I}_{\eta_i}$. Hence,
% $\langle[p_j,q_j],[g_j,h_j]\rangle(1\le j\le m)$ are the
%isolation intervals for $\theta_j$.
%
It is clear that condition \bref{eq-p1} is used to ensure the
precision: $|J_{i+1,j}| < \epsilon$.

We consider \bref{eq-cd} below. Assume that $J_{i+1,j}, J_{i+1,k}
(1\le k\neq j\le m)$ are any two intervals defined in \bref{eq-inti}
for the $(i+1)$-th coordinates of the roots
$(\xi_1,\ldots,\xi_i,\xi_{i+1,j})$,
$(\xi_1,\ldots,\xi_i,\xi_{i+1,k})$ of $\IS_{i+1}=0$, respectively.
Since we have derived the $\epsilon$-isolation boxes for the roots
of $\IS_i=0$, we need only to ensure that the intervals of the
$(i+1)$-th coordinates of the roots of $\IS_{i+1}=0$ lifted from a
fixed root of $\IS_{i}=0$ are isolation intervals. That is, to show
$\Dis(J_{i+1,j},J_{i+1,k})>0$.

Assume that $K_j=\langle[p_j,q_j],[g_j,h_j]\rangle$ and
$K_k=\langle[p_k,q_k],[g_k,h_k]\rangle$ are the isolation intervals
of the roots $\eta_j$, $\eta_k$ of $T_{i+1}(x)=0$. Here $\eta_j$,
$\eta_k$ correspond to $(\xi_1,\ldots,\xi_i,\xi_{i+1,j})$,
$(\xi_1,\ldots,\xi_i,\xi_{i+1,k})$, respectively. So $K_j,K_k$
correspond to $J_{i+1,j},J_{i+1,k}$, respectively. Assume that
$p_j\le p_k,g_j\le g_k$. Then we have {\small
$$\Dis(J_{i+1,j},J_{i+1,k})=\frac{\max\{\Dis([p_j-b,q_j-a],[p_k-b,q_k-a]),\Dis([g_j-d,h_j-c],[g_k-d,h_k-c])\}}{s_1\cdots
s_i},$$} and
$$\mathcal{L}_1=\Dis([p_j-b,q_j-a],[p_k-b,q_k-a])=
\begin{cases}
(p_k-q_j)-(b-a), \,\hbox{ if } (p_k-q_j)-(b-a)>0,\\
0, \hspace{1cm}\hbox{        otherwise},
\end{cases}
$$
$$\mathcal{L}_2=\Dis([g_j-d,h_j-c],[g_k-d,h_k-c])=
\begin{cases}
(g_k-h_j)-(d-c), \,\hbox{ if } (g_k-h_j)-(d-c)>0,\\
0, \hspace{1cm}\hbox{        otherwise}.
\end{cases}
$$
$K_j$ and $K_k$ are disjoint since they are isolation intervals of
$T_{i+1}(x)=0$. So
$$\Dis(K_j,K_k)=\max\{p_k-q_j,g_k-h_j\}>0.$$
It is clear that $\Dis(J_{i+1,j},J_{i+1,k})>0$ if $\mathcal{L}_1>0$ or $\mathcal{L}_2>0$.
%So the inequality holds if $\Dis(K_j,K_k)>\max\{b-a,d-c\}$.
Then we conclude if inequality \bref{eq-cd} is true, then
$\Dis(J_{i+1,j},J_{i+1,k})>0$. This proves the lemma.
\end{pf}

Geometrically, $S_{\eta_i}$ is the separation bound for the roots of
$T_{i+1}(x)=0$ corresponds to those roots of $\IS_{i+1}$ lifted from
the root of $\IS_i=0$ corresponding to the root $\eta_i$ of
$T_i(x)=0$.

\begin{rem}\label{rem-2} Note that in \bref{eq-cd}, we obtain
$I_i=\langle[a,b],[c,d]\rangle$ first and
$K_j=\langle[p_j,q_j],[g_j,h_j]\rangle$ later. We will refine the
isolation interval $I_i$ of $T_i(x)=0$ such that inequality
\bref{eq-cd} is true. After the refinement, all other conditions are
still valid. We need to do this kind of refinement only once.
\end{rem}

As a consequence of the above lemma, we have
\begin{cor}\label{cor-b2}
Let $\mathbb{B}$ be an $\epsilon$-isolation box for the root $\xi$
of $\IS_{i}=0$ and $J_{i+1,j}$ defined in \bref{eq-inti}. If
\bref{eq-p1}, \bref{eq-cd} are valid, then $\mathbb{B}\times
J_{i+1,j},j=1,\ldots,m$ are $\epsilon$-isolation boxes for the roots
$\rho_j$ of $\IS_{i+1}=0$, which are lifted from $\xi$.
\end{cor}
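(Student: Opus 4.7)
The plan is to verify the three defining properties of an $\epsilon$-isolation box for the collection of roots of $\IS_{i+1}=0$ lifted from $\xi$: namely, (i) each $\mathbb{B}\times J_{i+1,j}$ contains the designated lifted root $\rho_j$, (ii) the boxes $\mathbb{B}\times J_{i+1,j}$ for $j=1,\ldots,m$ are pairwise disjoint, and (iii) each of these boxes has diameter at most $\epsilon$. All three items reduce to facts already packaged in the preceding lemma and the definition \bref{eq-inti}.

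For containment, recall that a lifted root has the form $\rho_j=(\xi_1,\ldots,\xi_i,\xi_{i+1,j})$. Since $\mathbb{B}$ is an $\epsilon$-isolation box for $\xi=(\xi_1,\ldots,\xi_i)$, the first $i$ coordinates of $\rho_j$ lie in $\mathbb{B}$. For the last coordinate, I would use the relation $\theta_j=\eta_i+s_1\cdots s_i\,\xi_{i+1,j}$ from \bref{eq-n12} together with $\theta_j\in K_j$ and $\eta_i\in I_i$ to conclude that $\xi_{i+1,j}\in (K_j-I_i)/(s_1\cdots s_i)=J_{i+1,j}$. Thus $\rho_j\in\mathbb{B}\times J_{i+1,j}$.

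For disjointness, the preceding lemma — whose proof used hypothesis \bref{eq-cd} — gives $\Dis(J_{i+1,j},J_{i+1,k})>0$ whenever $j\neq k$. Hence the last-coordinate projections of $\mathbb{B}\times J_{i+1,j}$ and $\mathbb{B}\times J_{i+1,k}$ are already disjoint, so the product boxes themselves are disjoint. For the size bound, $\mathbb{B}$ is an $\epsilon$-box by hypothesis, while hypothesis \bref{eq-p1} states that the two side-lengths of $K_j-I_i$ are bounded by $s_1\cdots s_i\,\epsilon$; dividing by $s_1\cdots s_i$ in \bref{eq-inti} shows $|J_{i+1,j}|<\epsilon$. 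Therefore every side of $\mathbb{B}\times J_{i+1,j}$ has length at most $\epsilon$.

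There is no real obstacle here — the corollary is essentially a repackaging of the preceding lemma into the language of isolation boxes. The only subtlety worth flagging is that the statement is only about roots \emph{lifted from $\xi$}, so separation from boxes corresponding to other roots $\xi'\neq\xi$ of $\IS_i=0$ is not at issue in this step; that separation is handled separately by the fact that distinct $\epsilon$-isolation boxes of $\IS_i$ are themselves disjoint (and hence any two product boxes with differing first-$i$-coordinate factors are automatically disjoint).
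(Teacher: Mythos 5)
Your proposal is correct and matches the paper's intent: the paper states the corollary without a proof, as ``a consequence of the above lemma,'' and your argument is exactly the straightforward unpacking — containment of $\rho_j=(\xi_1,\ldots,\xi_i,\xi_{i+1,j})$ via \bref{eq-n12} and \bref{eq-inti}, disjointness from the preceding lemma's conclusion $\Dis(J_{i+1,j},J_{i+1,k})>0$ under \bref{eq-cd}, and the size bound from \bref{eq-p1}. Your closing remark that one need not worry here about separation from boxes over other roots $\xi'\neq\xi$ (that being inherited from the disjointness of the $\epsilon$-isolation boxes of $\IS_i$) is exactly the right point to flag.
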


{\bf Step 3.} We will show how to compute $d_{i+1}$ from the
isolation intervals of $T_{i+1}(x)=0$.

\begin{lem}\label{lem-di}
Let
\begin{equation}\label{eq-di}
d_{i+1}=\min\{\frac{S_{i+1}}{2s_1\cdots s_i}, \frac{d_i}{2s_i}\},
%\min_{\eta\in V_{\C}(\IS_{i})}\{\frac{d_i}{2s_i},
%\frac{S_{\eta,i+1}}{2s_1\cdots s_i}\}
\end{equation}
where $S_{i+1}$ is the minimal distance between any two isolation
intervals of $T_{i+1}(x)=0$. Then $d_{i+1}$ satisfies conditions
\bref{eq-dn1}.
\end{lem}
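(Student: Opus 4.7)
The plan is to verify the two clauses of \bref{eq-dn1} for the index $i+1$ separately: namely $d_{i+1}<D_{i+1}$ and $d_{i+1}<d_i/(2s_i)$, where $D_{i+1}$ denotes the fibrewise separation constant from \bref{eq-dn0}. The second clause is essentially built into the definition, since the minimum in \bref{eq-di} explicitly includes $d_i/(2s_i)$; so the content of the lemma is really in the first clause, where I must connect the separation $S_{i+1}$ of the \emph{univariate} isolation intervals of $T_{i+1}(x)=0$ to the fibrewise separation $D_{i+1}$ of the roots of $\IS_{i+1}$.

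For that first clause I would argue as follows. Take any $\eta\in V_{\C}(\IS_i)$ and two distinct lifts $(\eta,\alpha),(\eta,\beta)\in V_{\C}(\IS_{i+1})$. By statement (a) of Lemma \ref{lm-ww1}, the linear form $x=x_1+s_1x_2+\cdots+s_1\cdots s_i x_{i+1}$ is separating on $\IS_{i+1}$; writing $\eta=\xi_1+s_1\xi_2+\cdots+s_1\cdots s_{i-1}\xi_i$ for the image of $\eta$, the two roots of $T_{i+1}(x)=0$ corresponding to these lifts are
\[
\theta_\alpha=\eta+s_1\cdots s_i\,\alpha,\qquad \theta_\beta=\eta+s_1\cdots s_i\,\beta,
\]
so that $|\theta_\alpha-\theta_\beta|=s_1\cdots s_i\,|\alpha-\beta|$. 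Since $\theta_\alpha\ne\theta_\beta$ lie in distinct isolation intervals of $T_{i+1}(x)=0$, the distance between those intervals is at least $S_{i+1}$, and hence $|\theta_\alpha-\theta_\beta|\ge S_{i+1}$. Dividing, $|\alpha-\beta|\ge S_{i+1}/(s_1\cdots s_i)$, and taking half and the minimum over all such pairs yields $D_{i+1}\ge S_{i+1}/(2s_1\cdots s_i)$. Combined with \bref{eq-di} this gives $d_{i+1}\le S_{i+1}/(2s_1\cdots s_i)\le D_{i+1}$, which is the required bound (strict inequality can be secured by replacing the $\min$ in \bref{eq-di} with a strictly smaller rational, as is standard for this kind of construction).

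The second clause, $d_{i+1}<d_i/(2s_i)$, is immediate from the definition of $d_{i+1}$ as the minimum of two quantities one of which is $d_i/(2s_i)$; again one shrinks by an arbitrarily small amount if strict inequality is required.

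The step I expect to be the only real point of substance is the identification of $D_{i+1}$ with a bound derived from the \emph{univariate} separation $S_{i+1}$: this is where the separating-element property of Lemma \ref{lm-ww1}(a) is essential, because it guarantees that distinct fibrewise lifts over a common $\eta$ produce distinct roots of $T_{i+1}(x)=0$ whose distance in the $x$-axis is exactly $s_1\cdots s_i$ times their distance in the $x_{i+1}$-coordinate. Everything else is bookkeeping with the definition \bref{eq-di}.
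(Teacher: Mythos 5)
Your proof is correct and takes essentially the same approach as the paper's: both hinge on the identity $\xi_{i+1,j}-\xi_{i+1,k}=(\theta_j-\theta_k)/(s_1\cdots s_i)$ to relate the fiberwise separation $D_{i+1}$ of \bref{eq-dn0} to the univariate interval separation $S_{i+1}$ (the paper routes this through the per-fiber quantity $S_\eta$ from \bref{eq-cd} and then compares to $S_{i+1}$, while you compare directly, which is cleaner), and the second clause of \bref{eq-dn1} is read off from the definition \bref{eq-di}. Your closing remark about securing strict inequality by shrinking the $\min$ is a modest improvement in care over the paper's own proof, which glosses over the strict-versus-non-strict distinction in \bref{eq-dn1}.
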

\begin{pf}
Let $\theta_j$ and $\theta_k$ be two different roots of
$T_{i+1}(x)=0$ defined in \bref{eq-n12}. Then we have
$$\xi_{i+1,j}-\xi_{i+1,k}=\frac{\theta_j-\theta_k}{s_1\ldots s_{i}}.$$
Therefore,  $D_{i+1}= \min_{\eta\in V_{\C}(T_{i}(x))}\{
\frac{S_{\eta}}{2s_1\cdots s_i}\}$ is the parameter defined in
\bref{eq-dn0}, where $S_{\eta}$ is determined as in \bref{eq-cd}. It
is clear that $D_{i+1}$ is not larger than $S_{i+1}$ which is the
minimal distance between any two isolation intervals of
$T_{i+1}(x)=0$. Then, the first condition in \bref{eq-dn1} is
satisfied.
In order for the second condition in \bref{eq-dn1} to be satisfied,
we also require $d_{i+1}\le \frac{d_i}{2 s_i}.$ So the lemma is
proved.
\end{pf}

%2) In order to ensure that the isolation intervals of the roots of
%$T_{i+1}(x)=0$ are inside $\mathbb{I}_{\eta_i}$'s, we require that
%the precision to isolate $T_{i+1}=0$ is no larger than
%$\frac{s_1\cdots s_{i-1}d_{i}}{4}$.

%In the end, we obtain the $\epsilon$-isolation boxes for
%$\mathcal{P}=0$ and also an LUR for $\mathcal{P}$.\\

We can summarize the result as the following theorem which is an
interval version of Theorem \ref{th-1}.

\begin{thm}\label{thm-eps}
Let \bref{eq-lur} be an LUR such that $d_i$, $r_i$, and $s_i$
satisfy \bref{eq-di}, \bref{eq-dn2}, and \bref{eq-dn3} respectively,
$\mathcal{B}_{i}$ the $\epsilon_i$-isolation boxes for the roots of
$T_i(x)=0$, and $S_{i} = \min\{\Dis(B_1, B_2)\,|\,
B_1,B_2\in\mathcal{B}_i, B_1\ne B_2 \}$. If
\begin{eqnarray}\label{eq-eps1}
\epsilon_1\le\epsilon, \epsilon_i+\epsilon_{i+1}\le s_1\cdots s_i
\epsilon,\,\,
 \epsilon_i\le \frac{\rho_i}{4},\,
 \epsilon_{i+1}\le \frac{\rho_i}{4},\,
 \epsilon_i\le S_{i+1}, \end{eqnarray} where $\rho_i=s_1\cdots s_{i-1}\,d_{i}$,
 then \bref{eq-ibox} is a set of $\epsilon$-isolation boxes  for $\PS$.
\end{thm}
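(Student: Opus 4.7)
The plan is to reduce the theorem to Theorem \ref{th-1} by showing that the discrete box-matching condition $\Dis(B_{i+1},B_i)<\rho_i/2$ in \bref{eq-ibox} and the exact matching condition $\alpha_{i+1}\in\mathbb{S}_{\alpha_i}$ in the definition \bref{eq-rt1} of the LUR roots are equivalent under the precision hypotheses. Since Corollary \ref{cor-l1} puts the LUR roots in bijection with $V_{\C}(\PS)$, this equivalence will identify each product box in \bref{eq-ibox} with a unique root of $\PS=0$.

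First, I would verify both directions of the equivalence. For the forward direction, fix $\xi=(\xi_1,\ldots,\xi_n)\in V_{\C}(\PS)$, let $\eta_i=\xi_1+s_1\xi_2+\cdots+s_1\cdots s_{i-1}\xi_i$ be the associated root of $T_i(x)=0$, and let $B_i\in\mathcal{B}_i$ denote the isolation box containing $\eta_i$. Since $\epsilon_{i+1}\le\rho_i/4$, Lemma \ref{lm-rr2} gives $\Dis(B_{i+1},B_i)<\rho_i/2$, so $(B_1,\ldots,B_n)$ produces an admissible product box in \bref{eq-ibox} containing $\xi$. For the backward direction, given any admissible tuple $(B_1,\ldots,B_n)$ with $\eta_i\in B_i$ the unique root of $T_i(x)=0$ in $B_i$, the triangle inequality combined with $\epsilon_i,\epsilon_{i+1}\le\rho_i/4$ yields
\[
|\eta_{i+1}-\eta_i| \le \Dis(B_{i+1},B_i)+|B_i|+|B_{i+1}| < \rho_i/2+\rho_i/4+\rho_i/4 = \rho_i,
\]
hence $\eta_{i+1}\in\mathbb{S}_{\eta_i}$, and Theorem \ref{th-1} produces a root of $\PS=0$ in the product box.

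Second, I would verify the isolation and precision properties. Each $B_{i+1}$ has a unique matched parent $B_i$: if $B_i$ and $B'_i$ were both matched, the same triangle-inequality estimate would force the corresponding $\eta_i, \eta'_i$ to share the $\mathbb{S}$-neighborhood of $\eta_{i+1}$, contradicting Lemma \ref{lm-ww1}(c). Two distinct matched tuples must therefore first disagree at some coordinate $k$, and the hypothesis $\epsilon_{k-1}\le S_k$ ensures via a direct Minkowski-difference computation that the $k$-th factors $(B_k-B_{k-1})/(s_1\cdots s_{k-1})$ and $(B'_k-B_{k-1})/(s_1\cdots s_{k-1})$ of the two product boxes are disjoint. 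The $\epsilon$-precision bound is immediate: the first factor has length $\epsilon_1\le\epsilon$, and the $k$-th factor for $k\ge 2$ has length at most $(\epsilon_{k-1}+\epsilon_k)/(s_1\cdots s_{k-1})\le\epsilon$ by the hypothesis $\epsilon_{k-1}+\epsilon_k\le s_1\cdots s_{k-1}\epsilon$.

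The main technical obstacle is the interplay of the precision conditions in \bref{eq-eps1}: $\epsilon_i,\epsilon_{i+1}\le\rho_i/4$ enforce the equivalence between box and exact matchings, $\epsilon_i\le S_{i+1}$ preserves disjointness under the Minkowski differences, and the arithmetic bounds $\epsilon_1\le\epsilon$ and $\epsilon_i+\epsilon_{i+1}\le s_1\cdots s_i\epsilon$ control the coordinate-wise precision. Once these roles are disentangled the argument becomes a bookkeeping exercise driven by Lemmas \ref{lm-ww0}, \ref{lm-ww1}, \ref{lm-rr1}, and \ref{lm-rr2}.
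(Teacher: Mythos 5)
Your proposal is essentially correct and follows the same architecture as the paper's proof: verify that the precision hypotheses make the discrete box-matching condition $\Dis(B_{i+1},B_i)<\rho_i/2$ track the exact matching $\eta_{i+1}\in\mathbb{S}_{\eta_i}$, then fall back on Theorem \ref{th-1} and the earlier lemmas for uniqueness of roots per box, disjointness, and precision. The paper phrases this more telegraphically (assigning each inequality in \bref{eq-eps1} to the lemma where it is used and then invoking Corollary \ref{cor-b2}, which already packages the inductive lifting and disjointness argument), while you unfold the corollary and do the triangle-inequality bookkeeping by hand; the estimate $|\eta_{i+1}-\eta_i|\le\Dis(B_{i+1},B_i)+|B_i|+|B_{i+1}|<\rho_i$ and the use of Lemma \ref{lm-ww1}(c) for uniqueness of the parent box are exactly what Corollary \ref{cor-b2} and condition \bref{eq-cd} encode. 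One small observation, which is present in both the paper and your proof and not specific to your approach: the hypothesis $\epsilon_i\le S_{i+1}$ combined with $|B_i|\le\epsilon_i$ gives only $|B_i|\le S_{i+1}$, whereas \bref{eq-cd} technically requires a strict inequality; this is a cosmetic strict-vs-nonstrict issue, not a substantive gap.
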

\begin{pf}
%The theorem is a consequence of the algorithm described in this section.
%
We  first explain the functions of the inequalities in
\bref{eq-eps1}.
The first two inequalities in \bref{eq-eps1} are introduced in
\bref{eq-p1} to ensure the $\epsilon$ precision for the isolation
boxes.
The third inequality in \bref{eq-eps1} is introduced in Lemma
\ref{lm-rr1} to ensure $\theta_j\in\mathbb{I}_{I_i}$ and
$\mathbb{I}_{I_i}$ are disjoint.
The fourth inequality is introduced in Lemma \ref{lm-rr2} to ensure
the isolation intervals of the roots of $T_{i+1}(x)=0$ are inside
their corresponding interval $\mathbb{I}_{I_i}$.
The last inequality is introduced in \bref{eq-cd} to ensure the
recovered isolation boxes of $\mathcal{P}$ are disjoint.

Now the theorem is a consequence of Corollary \ref{cor-b2}. Here, we
give the explicit expression for the isolation boxes. The expression
for interval $J_{i+1,j}$ in \bref{eq-inti} is directly given. The
matching condition $\Dis(B_{i+1},B_{i}) < \rho_i/2$ is from
condition \bref{eq-rr2}.
\end{pf}

We have the following effective version of Theorems \ref{th-1} and
\ref{thm-eps} by giving an explicit formula for $\epsilon_i$.
\begin{thm} \label{thm-eps1} Using the same notations as Theorem \ref{thm-eps}.
Let $\epsilon$ be the given precision to isolate the roots of
$\mathcal{P}$.
Let
\begin{eqnarray}\label{eq-eps3}
 \epsilon_1 &=& \min\{\epsilon,\frac{s_1\epsilon}{2},\frac{d_{1}}{4},
 S_{2}\},\nonumber\\
\epsilon_i&=&\min\{\frac{s_1\cdots
s_{i-1}\epsilon}{2},\frac{s_1\cdots s_i\epsilon}{2},
     \frac{s_1\cdots s_{i-1}d_{i}}{4}, \frac{s_1\cdots s_{i-2}d_{i-1}}{4}, S_{i+1}\},
\end{eqnarray}\
where $i=2,...,n$, $s_0=1,s_n=1$, $S_{n+1}=+\infty$.
If we isolate the roots of $T_i(x)=0$ with precision $\epsilon_i$,
then  \bref{eq-ibox} is a set of $\epsilon$-isolation boxes  for
$\PS=0$.
\end{thm}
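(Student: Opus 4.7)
The plan is to reduce directly to Theorem \ref{thm-eps}: I only need to check that the explicit $\epsilon_i$ defined in \bref{eq-eps3} satisfy each of the five inequalities appearing in \bref{eq-eps1}, after which the conclusion is immediate. Each required inequality will be supplied by a specific entry of the minimum defining $\epsilon_i$, so the whole proof amounts to term matching with a bit of bookkeeping around the boundary indices.

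First I would dispatch the single-index bounds. The bound $\epsilon_1 \le \epsilon$ is given by the first entry $\epsilon$ of the min defining $\epsilon_1$. The bound $\epsilon_i \le \rho_i/4 = s_1\cdots s_{i-1}d_i/4$ is given by the third entry of $\epsilon_i$ (under the convention $s_0 = 1$ this reads $d_1/4$ when $i=1$, matching the explicit form of $\epsilon_1$). The bound $\epsilon_i \le S_{i+1}$ is given by the last entry of $\epsilon_i$; note that for $i = n$ the convention $S_{n+1} = +\infty$ makes this trivial.

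For the pair-sum bound $\epsilon_i + \epsilon_{i+1} \le s_1\cdots s_i\,\epsilon$ (needed for $i = 1,\ldots,n-1$), the second entry of $\epsilon_i$ is $\frac{s_1\cdots s_i\,\epsilon}{2}$, and the first entry of $\epsilon_{i+1}$, obtained by specializing the $i\ge 2$ formula at index $i+1$, is $\frac{s_1\cdots s_{(i+1)-1}\epsilon}{2} = \frac{s_1\cdots s_i\,\epsilon}{2}$; the two halves sum to $s_1\cdots s_i\,\epsilon$. The companion bound $\epsilon_{i+1} \le \rho_i/4 = s_1\cdots s_{i-1}d_i/4$ is furnished by the fourth entry of $\epsilon_{i+1}$, which reads $\frac{s_1\cdots s_{(i+1)-2}d_{(i+1)-1}}{4} = \frac{s_1\cdots s_{i-1}d_i}{4}$.

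There is no genuine analytic obstacle; the only delicate point is keeping the boundary conventions $s_0 = s_n = 1$ and $S_{n+1} = +\infty$ straight, and observing that $\epsilon_1$ uses a slightly different first entry ($\epsilon$ rather than $\epsilon/2$) because for $i=1$ there is no preceding pair-sum constraint that would demand halving. Once each entry of \bref{eq-eps3} is paired with its matching inequality in \bref{eq-eps1} as above, Theorem \ref{thm-eps} delivers the conclusion that \bref{eq-ibox} is a set of $\epsilon$-isolation boxes for $\PS=0$.
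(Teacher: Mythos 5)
Your proof is correct and follows essentially the same route as the paper: reduce to Theorem \ref{thm-eps} by checking that each $\epsilon_i$ from \bref{eq-eps3} is dominated by the right entry for every inequality in \bref{eq-eps1}. If anything you are more thorough than the paper's own proof, which explicitly verifies only the pair-sum bound $\epsilon_i+\epsilon_{i+1}\le s_1\cdots s_i\epsilon$ and dismisses the rest as clear, whereas you carefully track the boundary conventions $s_0=s_n=1$, $S_{n+1}=+\infty$, and the special shape of $\epsilon_1$.
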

\begin{proof}
By \bref{eq-eps3}, we have
 $\epsilon_i\le\frac{s_1\cdots s_{i}\epsilon}{2}$ and
 $\epsilon_{i+1}\le\frac{s_1\cdots s_{i}\epsilon}{2}$.
Then the second inequality in \bref{eq-eps1},
$\epsilon_i+\epsilon_{i+1}\le s_1\cdots s_i \epsilon$, is valid. All
other inequalities in \bref{eq-eps1} are clearly satisfied and the
theorem is proved.
\end{proof}

We can also compute the multiplicities of the roots with the LUR
algorithm.

\begin{cor}
If we compute the last univariate polynomial $T_n(x)$ in the LUR as
the characteristic polynomial of $M_x$, then the multiplicities of
the roots of $\PS=0$ are the multiplicities of the corresponding
roots of $T_n(x)=0$.
\end{cor}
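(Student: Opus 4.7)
The plan is to combine Stickelberger's theorem (Theorem \ref{thm-laz}) with the separating property of the linear form $x = x_1 + s_1 x_2 + \cdots + s_1\cdots s_{n-1}x_n$ established in Lemma \ref{lm-ww1}(a).

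First, I would recall that $T_n(x)$ is, by assumption here, the characteristic polynomial of the multiplication endomorphism $M_x \colon \mathcal{A} \to \mathcal{A}$ on $\mathcal{A} = \Q[x_1,\ldots,x_n]/(\PS)$. By Theorem \ref{thm-laz}, the eigenvalues of $M_x$, counted with algebraic multiplicity in the characteristic polynomial, are exactly the values $x(\xi) = \xi_1 + s_1\xi_2 + \cdots + s_1\cdots s_{n-1}\xi_n$ taken at the roots $\xi \in V_{\C}(\PS)$, each $\xi$ contributing with its multiplicity as a root of $\PS = 0$ (i.e., the dimension of the corresponding local factor of $\mathcal{A}$). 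Thus for every root $\xi$ of $\PS = 0$ of multiplicity $m_\xi$, the value $x(\xi)$ is a root of $T_n(x) = 0$ of multiplicity at least $m_\xi$, and the total of the $m_\xi$ equals $\deg T_n$.

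Next, I would invoke Lemma \ref{lm-ww1}(a) (or equivalently Corollary \ref{cor-l1}) with $i = n$: the linear combination $x$ is a separating element of $\IS_n = (\PS)$, so the map $\xi \mapsto x(\xi)$ from $V_{\C}(\PS)$ to $V_{\C}(T_n)$ is injective, hence bijective. This forces the multiplicities in Stickelberger's assignment to be concentrated one per eigenvalue: the multiplicity of $x(\xi)$ as a root of $T_n(x) = 0$ equals the multiplicity $m_\xi$ of $\xi$ as a root of $\PS = 0$.

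I do not expect any serious obstacle; the corollary is essentially a one-line consequence of Stickelberger plus separating element. The only point requiring a little care is matching the two notions of ``multiplicity'' — the algebraic multiplicity of an eigenvalue in $\det(\lambda I - M_x)$ on one side, and the length of the local ring $\mathcal{A}_\xi$ on the other — but this is precisely the content of Theorem \ref{thm-laz} as cited, so it can simply be invoked.
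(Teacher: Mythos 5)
Your proof is correct and follows the same route as the paper: invoke Lemma \ref{lm-ww1}(a) to get that $x$ is a separating element, then apply Stickelberger's Theorem \ref{thm-laz} to transfer multiplicities from the roots of $\PS=0$ to the eigenvalues of $M_x$, i.e., to the roots of $T_n(x)=0$. You spell out the bijectivity and the multiplicity-matching argument more explicitly than the paper does, but it is the same two-ingredient argument.
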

\begin{proof}
By (a) of Lemma \ref{lm-ww1}, $x = x_1+s_1\,x_2+\cdots+s_1\cdots
s_{n-1}x_n$ is a separating element. By Theorem \ref{thm-laz}, the
characteristic polynomial of $M_x$ keeps the multiplicities of the
roots of the system. The corollary is proved.
\end{proof}

\subsection{Algorithm}

Now, we can give the full algorithm based on LUR.

\begin{alg} \label{alg-1}
The input is a zero dimensional polynomial system
$\PS=\{P_1,\dots,P_t\}$ in $\Q[x_1,\ldots,x_n]$ and a positive
rational number $\epsilon$. The output is an LUR for $\PS$ and a set
of $\epsilon$-isolation boxes for the roots of $\PS=0$.
\end{alg}

\begin{description}
\item[S1]
Compute a Gr$\ddot{o}$bner basis $\GB$ of $\PS$ with the graded
reverse lexicographic order and a monomial basis $\B$ for linear
space $\mathcal{A}=\Q[x_1,\ldots,x_n]/(\PS)$ over $\Q$.

\item[S2]
Compute $T_1(x)$ as defined in \bref{eq-ti} with the method given in
Section 3.2; compute a set of $\epsilon$-isolation boxes
$\mathcal{B}_1$ for the complex roots of $T_1(x)=0$; set $d_1 =
\min\{\Dis(B_1,B_2)\, |\, B_1,$ $B_2\in\mathcal{B}_1, B_1\ne B_2,
\}$.

\item[S3]
For $i=1,\ldots,n-1$, do steps  {\bf S4}-{\bf S9}; output the set of
boxes \bref{eq-ibox}.

\item[S4]
Compute $r_{i+1}$ with the method in Lemma
\ref{lm-rb1}. Select a rational number $s_i$ such that condition
\bref{eq-dn3} is valid.

\item[S5]
Compute $T_{i+1}(x)$ as defined in \bref{eq-ti} with the method
given in Section 3.2.

\item[S6]
Set $\rho_i=s_1\cdots s_{i-1}d_{i}$ and compute a set of
$\frac{1}{4}\rho_i$-isolation boxes $\mathcal{B}_{i+1}$ for the
complex roots of $T_{i+1}(x)=0$

\item[S7]
Set $S_{i+1} = \min\{\Dis(B_1,B_2)\,|\,B_1,B_2\in\mathcal{B}_{i+1},
B_1\ne B_2\}$.

\item[S8]
Compute $d_{i+1}$ with formula \bref{eq-di}.

\item[S9]
Compute $\epsilon_i$ with formula \bref{eq-eps3}; refine the
isolation boxes $\mathcal{B}_i$ of $T_i(x)=0$ with precision
$\epsilon_i$; still denote the isolation boxes as $\mathcal{B}_i$.

\end{description}

\begin{rem}
From Lemma \ref{lm-rr1}, the roots of $T_{i+1}(x)=0$ are in the
rectangle $\mathbb{I}_{I_i}$.
So, we need only to isolate the roots of $T_i(x)=0$ inside these
rectangles. This property is very useful in practice, see Figure
\ref{fig-slope} for an illustration.
\end{rem}

\section{Examples}
In this section, we will give some examples to illustrate our
method.

We first use the following  example to show how to isolate the roots
of a system with our method. Note that with an LUR, we can also use
floating point numbers to compute the roots of $\PS=0$ if the
floating point number can provide the required precision as shown in
the following example.

\begin{exmp} Consider the system
$\mathcal{P}:=[x^2+y^2+z^2-3,x^2+2y^2-3z+1,x+y-z].$ The coordinate order is $(x,y,z)$.

The Gr$\ddot{o}$bner basis $\GB$ with the graded reverse
lexicographic order $z>y>x$ of $\PS$ is: {\small
$$[-x-y+z,{x}^{2}+2\,yx+3\,x-4+3\,y,-3\,x+{x}^{2}+1-3\,y+2\,{y}^{2},6\,{
x}^{3}-30+9\,{x}^{2}+25\,y+5\,x].$$} The leading monomials of the
basis are $\{z,x\,y,y^2,x^3\}$. So the monomial basis of the
quotient algebra $\mathcal{A}=\Q[x_1,...,x_n]/(\PS)$ is $\B=[1, x,
y, x^2]$.

Let $t=x$, we can compute:
$$M_t=\left[ \begin {array}{cccc} 0&1&0&0\\ \noalign{\medskip}0&0&0&1\\ \noalign{\medskip}2&-3/2&-3/2&-1/2\\ \noalign{\medskip}5&-5/6&-{
\frac {25}{6}}&-3/2\end {array} \right].$$ The minimal polynomial of
$M_t$ is $$T_1(t)=5-60\,t+6\,t^2+18\,t^3+6\,t^4.$$ Compute its
complex roots with the function ``Analytic'' in Maple package
[RootFinding], we obtain
\begin{eqnarray*}
 R_1&=&[- 2.22081423399575-1.53519779646152\,\mathfrak{i}, - 2.22081423399575\\
 &&+ 1.53519779646152\,\mathfrak{i},
  0.0842270424726020, 1.35740142551890].
\end{eqnarray*}
Computing the roots distance with formula \bref{eq-rtsep}, we obtain
$d_1\le0.6365871918$. We can set $$d_1=\frac{1}{2}.$$

In a similar way, we compute $M_y$ and its minimal polynomial
$g_2(y)=-29-66\,y+60\,y^2+12\,y^4$. The root bound of $g_2(y)$ is
$3$. So we have $r_2=6$. Since $\frac{d_1}{r_2}=\frac{1}{12}$, we
set $$s_1=\frac{1}{20}.$$ Let $t=x+s_1\,y$. We can compute a matrix
$M_t$ and its minimal polynomial
$$T_2(t)=863337-6119640\,t+360000\,t^2+1920000\,t^3+640000\,t^4.$$
Computing its complex roots, we have
\begin{eqnarray*}
 R_2&=&[-2.24194942371773 -1.41342395552762\mathfrak{i}, -2.24194942371773\\
 &&+1.41342395552762 \mathfrak{i}, 0.143249906267126, 1.34064894116850].
\end{eqnarray*}
Computing the minimal distance between any two roots, we have
$S_2=0.5986995174$. From equation \bref{eq-di}, we can obtain
$$d_2=\min\{\frac{S_2}{2\,s_1},\frac{d_1}{2\,s_1}\}=5.$$

Compute $M_z$ and its minimal polynomial $g_3(z)=121-132 z-36 z^2+36
z^3+12 z^4$. Then the root bound of $g_3(z)$ is $5$. We have
$r_3=10$. We can set $$s_2=\frac{1}{2}\le
\frac{d_2}{r_3}=\frac{1}{2}.$$ Let $t=x+s_1\,y+s_1 s_2 z$. Compute
$M_t$ and its minimal polynomial
$$T_3(t)=53294617-309903360\,t\\+11884800\,t^2+94464000\,t^3+30720000\,t^4.$$
Computing its complex roots, we have
\begin{eqnarray*}
 R_3&=&[- 2.30803737442857- 1.39091697997219\,\mathfrak{i},- 2.30803737442857\\
 &&+1.39091697997219\,\mathfrak{i}, 0.174867014226204, 1.36620773463121].
\end{eqnarray*}

We use $R_1[i]$ to represent the $i$-th element of $R_1$. $R_2[i],
R_3[i]$ are similarly defined. Since
$R_2[1]-R_1[1]=-0.021135190+0.121773840 \mathfrak{i}$ and the
absolute values of its  real part and imaginary part are lese than
$1/2$, $(R_1[1],\frac{R_2[1]-R_1[1]}{s_1})$ is a root of
$\PS\cap\Q[x,y]$. But $R_2[2]-R_1[1]=-0.021135190+2.948621752
\mathfrak{i}$ and its imaginary part is larger than $1/2$. Then
$R_2[2]$ does not correspond to $R_1[1]$.
$R_3[1]-R_2[1]=-0.066087950+0.022506976 \mathfrak{i}$ and the
absolute values of its real part and imaginary part are lese than
$1/4$, so
\begin{eqnarray*}
&&(R_1[1], \frac{R_2[1]-R_1[1]}{s_1}, \frac{R_3[1]-R_2[1]}{s_1 s_2})\\
&=&(-2.22081423399575-1.53519779646152\, \mathfrak{i}, -0.42270380+2.43547680\, \mathfrak{i},\\
 &&-2.64351800+0.90027904\, \mathfrak{i})
\end{eqnarray*}
is a root of $\PS=0$. In a similar way, we can find all other
complex roots of $\PS=0$. And from Theorem \ref{thm-eps1}, we can
set $\epsilon_1=\frac{1}{40}\epsilon,
\epsilon_2=\epsilon_3=\frac{1}{80}\epsilon$, where $\epsilon$ is the
given precision. So if we refine the roots of $T_i(t)=0$ to five
digits, we can obtain the roots of $\PS=0$ with three digits.

We also obtain an LUR for $\mathcal{P}$ as follows:
$$[[T_1(t),T_2(t),T_3(t)],[s_1,s_2],[d_1,d_2]].$$

The roots of $\mathcal{P}=0$ are: $$[(
\alpha,20(\beta-\alpha),40(\gamma-\beta) )
|T_1(\alpha)=0,T_2(\beta)=0,T_3(\gamma)=0,|\beta-\alpha|<1/2,|\gamma-\beta|<1/4].$$

Assuming that the final precision for the real roots of the system
is $\epsilon=1/2^{10}$ and isolating the real roots of $T_i(t)=0$
with precision $\epsilon_1=\frac{1}{40}\epsilon,
\epsilon_2=\epsilon_3=\frac{1}{80}\epsilon$, respectively, we can
obtain the following two real roots of $\PS=0$ with the given
precision: {\footnotesize
$$[\frac{5519}{65536}, \frac{345}{4096}]\times[\frac{4835}{4096}, \frac{38695}{32768}]\times[\frac{20715}{16384}, \frac{20725}{16384}],\,\,[\frac{44479}{32768}, \frac{88959}{65536}]\times[\frac{-10985}{32768}, \frac{-5485}{16384}]\times[\frac{16745}{16384}, \frac{16755}{16384}].$$
}
\end{exmp}
In the next example, we will compare our method with RUR \cite{rur}.
\begin{exmp} Consider the following example from paper \cite{rur}.
$\mathcal{P}:=[24\,uz-{u}^{2}-{z}^{2}-{u}^{2}{z}^{2}-13,24\,yz-{y}^{2}-{z}^{2}-{y}^{
2}{z}^{2}-13,24\,uy-{u}^{2}-{y}^{2}-{u}^{2}{y}^{2}-13].$ The
coordinate order is $(u,y,z)$.

The RUR is as follows.
$$f(x)=0,\,u=\frac{g(u,x)}{g(1,x)},\,y=\frac{g(y,x)}{g(1,x)},\,z=\frac{g(z,x)}{g(1,x)},$$
where
\begin{eqnarray}
f(x)&=&{x}^{16}-5656\,{x}^{14}+12508972\,{x}^{12}-
14213402440\,{x}^{10}
+9020869309270\,{x}^{8}\nonumber\\&&-3216081009505000\,{x}^{6}
+606833014754230732\,{x}^{4}\nonumber\\ &&-51316296630855044152\,{x}^{2}+1068130551224672624689,\nonumber\\
g(1,x)&=&{x}^{15}-4949
\,{x}^{13}+9381729\,{x}^{11}-8883376525\,{x}^{9}+4510434654635\,{x}^{7
}\nonumber\\&&-1206030378564375\,{x}^{5}+151708253688557683\,{x}^{3}-
6414537078856880519\,x,\nonumber\\
g(u,x)&=&116\,{x}^{14}-483592\,{x}^{12}+
784226868\,{x}^{10}-634062241592\,{x}^{8}\nonumber\\ &&+270086313707548\,{x}^{6}-
58355579408017944\,{x}^{4}+5520988105236180668\,{x}^{2}\nonumber\\ &&-131448117382500870952,\nonumber\\
g(y,x)&=&86\,{x}^{14}
-418870\,{x}^{12}+759804846\,{x}^{10}-670485664238\,{x}^{8}+
307445009725282\,{x}^{6}\nonumber\\ &&-71012402366579778\,{x}^{4}+
7099657810552674458\,{x}^{2}-168190996202566563226,\nonumber\\
g(z,x)&=&71\,{x}^{14}-355135\,{x}^{12}+673508751\,{x}^{10
}-633214359791\,{x}^{8}+314815356659869\,{x}^{6}\nonumber\\ &&-79677638700441717\,{x
}^{4}+8618491509948092045\,{x}^{2}-205956089289536014429.\nonumber
\end{eqnarray}

 An LUR of $\mathcal{P}$ is as follows:
{\small $$[[T_1(t),T_2(t),T_3(t)],[s_1,s_2],[d_1,d_2]]=[[T_1(t),T_2(t),T_3(t)], [1/200, 1/15], [0.2237374734, 2.146554200]],$$}
where
{\small
\begin{eqnarray}
T_1(t)&=&169-1820\,{t}^{2}+2622\,{t}^{4}-140\,{t}^{6}+{t}^{8},\nonumber\\
T_2(t)&=&12034552627604020308981441166197-133523438810776274535699687120000\,{t
}^{2}\nonumber\\&&+334257305564156882138712000000000\,{t}^{4}-
256456971612085383936000000000000\,{t}^{6}\nonumber\\&&+
23629005541670400000000000000000\,{t}^{8}-
665288908800000000000000000000\,{t}^{10}\nonumber\\&&+4096000000000000000000000000
\,{t}^{12},\nonumber\\
T_3(t)&=&398658124842757922827990174525891734024598098970801\nonumber\\&&-
5057045016775809265742737650285696238919118781687500\,{t}^{2}\nonumber\\&&+
18306568462902747682078658662680830721818866699218750\,{t}^{4}\nonumber\\&&-
26971016274307991838575084944533427932357788085937500\,{t}^{6}\nonumber\\&&+
15563591910271113423505114668403939783573150634765625\,{t}^{8}\nonumber\\&&-
1936419155067693199961145026385784149169921875000000\,{t}^{10}\nonumber\\&&+
94190634217706926258139312267303466796875000000000\,{t}^{12}\nonumber\\&&-
1851048158439662307500839233398437500000000000000\,{t}^{14}\nonumber\\&&+
10022595757618546485900878906250000000000000000\,{t}^{16}.\nonumber
\end{eqnarray}
}

The roots of $\mathcal{P}$ are: $\{(u,y,z)=(
\alpha,200(\beta-\alpha),3000(\gamma-\beta) )
|T_1(\alpha)=0,T_2(\beta)=0,T_3(\gamma)=0,|\beta-\alpha|<0.2237374734,
|\gamma-\beta|<0.01073277100\}.$

%In order to compute a given precision for the roots of the system,
%from Theorem \ref{thm-eps1}, we can refine the precisions of the
%roots of $T_i(x)=0\, (i=1,2,3)$ at most once to some fixed
%precisions according to Lemma \ref{cor-1}.
\end{exmp}

\section{Conclusion}
We give a new representation, LUR, for the roots of a
zero-dimensional polynomial system $\mathcal{P}$ and propose an
algorithm to isolate the roots of $\mathcal{P}$ under a given
precision $\epsilon$. For the LUR, the roots of the system are
represented as a linear combination of the roots of some univariate
polynomial equations. The main advantage of LUR is that precision
control of the roots of the system is much clearer.

The main drawback of the LUR method is that when the parameters
$s_i$ becomes very small, the coefficients of $T_i(t)$ could become
very large, which will slow down the algorithm. To improve the
efficiency of the LUR algorithm is our future work.

%\bibliographystyle{plainnat}
%\bibliography{bib-algcurves}
%\bibliography{Personal,ET,algebra,complexity,emiris,arithm,numeric,geometry,soft,geom,ecg}

\end{document}